\newcommand{\R}{\mathbb{R}}
\newcommand{\C}{\mathbb{C}} 
\renewcommand{\d}{\partial}
\renewcommand{\div}{{\rm div}}
\newcommand{\kabs}{|k|}
\renewcommand{\Re}{{\rm Re}}
\newcommand{\idmat}{\mathds{1}}
\newcommand{\op}{(1)}
\newcommand{\twop}{(2)}
\newcommand{\thp}{(3)}
\newcommand{\fp}{(4)}
\newcommand{\fip}{(5)}
\newcommand{\sixp}{(6)}
\newcommand{\sevp}{(7)}
\newcommand{\eip}{(8)}
\newcommand{\mue}{\mu_e}
\newcommand{\muc}{\mu_c}
\newcommand{\mumic}{\mu_{\text{micro}}}
\newcommand{\lame}{\lambda_e}
\newcommand{\lammic}{\lambda_{\text{micro}}}
\newcommand{\Lc}{L_c^2}
\newcommand{\norm}[1]{\left\lVert#1\right\rVert}
\newcommand{\sigmatil}{\widetilde{\sigma}}
\newcommand{\sigmahat}{\widehat{\sigma}}
\newcommand{\Ce}{\mathbb{C}_e}
\newcommand{\Te}{\mathbb{T}_e}
\newcommand{\Tc}{\mathbb{T}_c}
\newcommand{\Cc}{\mathbb{C}_c}
\newcommand{\Jmic}{\mathbb{J}_{\text{micro}}}
\newcommand{\Cmic}{\mathbb{C}_{\text{micro}}}
\newcommand{\Jc}{\mathbb{J}_c}
\newcommand{\Le}{\mathbb{L}_e}
\newcommand{\Cet}{\widetilde{\mathbb{C}}_e}
\theoremstyle{remark}
\newtheorem{lemma}{Lemma}
\DeclareMathOperator{\tr}{tr}
\DeclareMathOperator{\curl}{curl}
\DeclareMathOperator{\Div}{Div}
\DeclareMathOperator{\Curl}{Curl}
\DeclareMathOperator{\sym}{sym}
\let\skew\relax
\DeclareMathOperator{\skew}{skew}
\numberwithin{equation}{section}
\let\@fnsymbol\@arabic
\title{Relaxed micromorphic broadband scattering for finite-size meta-structures - a detailed development}
\author{Alexios Aivaliotis\thanks{Alexios Aivaliotis, alexios.aivaliotis@insa-lyon.fr, GEOMAS, INSA-Lyon, Universit\'e de Lyon, 20 avenue Albert Einstein,
		69621, Villeurbanne cedex, France}~, Domenico Tallarico\thanks{Domenico Tallarico, domenico.tallarico@empa.ch, Empa, Swiss Federal Laboratories for Materials Science and Technology, Lab. Acoustics/Noise Control, \"Uberlandstrasse 129, 8600 D\"ubendorf, Switzerland}~, Marco-Valerio d'Agostino\thanks{Marco-Valerio d'Agostino, marco-valerio.dagostino@insa-lyon.fr, GEOMAS, INSA-Lyon, Universit\'e de Lyon, 20 avenue Albert Einstein,
		69621, Villeurbanne cedex, France}~, Ali Daouadji\thanks{Ali Daouadji, ali.daouadji@insa-lyon.fr, GEOMAS, INSA-Lyon, Universit\'e de Lyon, 20 avenue Albert Einstein,
		69621, Villeurbanne Cedex, France}~,\\ Patrizio Neff\thanks{Patrizio Neff, patrizio.neff@uni-due.de, Head of Chair for Nonlinear Analysis and Modelling, Fakultät für Mathematik, Universität Duisburg-Essen,
		Mathematik-Carrée, Thea-Leymann-Straße 9, 45127 Essen, Germany}~ and Angela Madeo\thanks{Angela Madeo, corresponding author, angela.madeo@insa-lyon.fr, GEOMAS, INSA-Lyon, Universit\'e
		de Lyon, 20 avenue Albert Einstein, 69621, Villeurbanne Cedex, France}}
\date{\today}
\begin{document}
\maketitle
\small
\begin{abstract}
The conception of new metamaterials showing unorthodox behaviors with respect to elastic wave propagation has become possible in recent years thanks to powerful modeling tools, allowing the description of an infinite medium generated by periodically architectured base materials. Nevertheless, when it comes to the study of the scattering properties of finite-sized structures, dealing with the correct boundary conditions at the macroscopic scale becomes challenging. In this paper, we show how finite-domain boundary value problems can be set-up in the framework of enriched continuum mechanics (relaxed micromorphic model) by imposing continuity of macroscopic displacement and of generalized tractions, as well as additional conditions on the microdistortion tensor and on the double-traction.
	
The case of a metamaterial slab of finite width is presented, its scattering properties are studied via a semi-analytical solution of the relaxed micromorphic model and compared to finite element simulations encoding all details of the selected microstructure. The reflection coefficient obtained via the two methods is presented as a function of the frequency and of the direction of propagation of the incident wave. We find excellent agreement for a large range of frequencies going from the long-wave limit to frequencies beyond the first band-gap and for angles of incidence ranging from normal to near parallel incidence. The case of a semi-infinite metamaterial is also presented and is seen to be a reliable measure of the average behavior of the finite meta-structure. A tremendous gain in terms of computational time is obtained when using the relaxed micromorphic model for the study of the considered meta-structure. The present paper is an important stepping stone for the development of a theoretical paradigm shift enabling the exploration of meta-structures at large scales.
\end{abstract}
\normalsize
\textbf{Keywords:} enriched continuum mechanics, anisotropic metamaterials, band-gaps, wave-propagation, relaxed micromorphic model, interface, scattering, finite-sized meta-structures.

\vspace{2mm}
\textbf{}\\
\textbf{AMS 2010 subject classification:} 74A30 (nonsimple
materials), 74A60 (micromechanical theories),
74B05 (classical linear elasticity), 74J05 (linear waves), 74J10 (bulk waves), 75J15 (surface waves), 74J20 (wave scattering), 74M25 (micromechanics), 74Q15
(effective constitutive equations).

\newpage
\tableofcontents

\section{Introduction}\label{sec:intro}

Recent years have seen the rapid development of mechanical metamaterials and phononic crystals showing unorthodox behaviors with respect to elastic wave propagation, including focusing, channeling, cloaking, filtering, etc. \cite{misseroni2016cymatics,chen2010acoustic,craster2012acoustic,kadic2013metamaterials,norris2015acoustic}. The basic idea underlying the design of these metamaterials is that of suitably engineering the architecture of their microstructure in such a way that the resulting macroscopic (homogenized) properties can exhibit the desired exotic characteristics. One of the most impressive features provided by such metamaterials is that of showing band-gaps, \emph{i.e.} frequency ranges for which wave propagation is inhibited. The most widespread class of metamaterials consists of those which are obtained by a periodic repetition in space of a specific unit cell and which are known as periodic metamaterials. For such metamaterials, renown scientists have provided analytical \cite{willis1997dynamics,willis2009exact,willis2011effective,willis2012construction} or numerical \cite{geers2010multi} homogenization techniques (based on the seminal works of Bloch \cite{bloch1929quantenmechanik} and Floquet \cite{floquet1883equations}), allowing to obtain a homogenized model which suitably describes, to a good extent, the dynamical behavior of the bulk periodic metamaterial at the macroscopic scale. Rigorous models for studying the macroscopic mechanical behavior of non-periodic structures become rarer and usually rely on the use of detailed finite element modeling of the considered microstructures (see e.g. \cite{krushynska2014towards}), thus rendering the implementation of large structures computationally demanding and impractical. 

At the current state of knowledge, little effort is made to model large metamaterial structures (which we will call \emph{meta-structures}), due to the difficulty of imposing suitable boundary conditions in the framework of homogenization theories (see \cite{srivastava2017evanescent}). In this article, we shed new light in this direction by the  introduction of an enriched continuum model of the micromorphic type (relaxed micromorphic model), equipped with the proper boundary conditions for the effective description of a $2$D finite-sized band-gap meta-structure. The relaxed micromorphic model (see \cite{aivaliotis2018low,barbagallo2019relaxed,neff2014unifying,dagostino2019effective,madeo2014band,madeo2015wave,madeo2016complete,madeo2016first,madeo2016reflection,madeo2017modeling,madeo2017relaxed,madeo2017review,madeo2017role,barbagallo2017transparent,neff2015relaxed,neff2017real} for preliminary results)
has a suitable structure which allows to describe the average properties of (periodic or even non-periodic) anisotropic metamaterials with a limited number of constant material parameters and for an extended frequency range going from the long-wave limit to frequencies which are beyond the first band-gap. The rigorous development of the relaxed micromorphic model for anisotropic metamaterials has been given in \cite{dagostino2019effective,neff2019identification},  where applications to different classes of symmetry and the particular case of tetragonal periodic metamaterials are also discussed. In \cite{dagostino2019effective}, a procedure to univocally determine some of the material parameters for periodic metamaterials with static tests is also provided. It is important to point out that the relaxed micromorphic model is not obtained via a formal homogenization procedure, but is developed generalizing the framework of macroscopic continuum elasticity by introducing enriched kinematics and enhanced constitutive laws for the strain energy density. In this way, extra degrees of freedom are added to the classical macroscopic displacement via the introduction of the micro-distortion tensor and the chosen constitutive form for the anisotropic strain energy density. This allows to introduce a limited number of elastic parameters through fourth-order macro and micro elasticity tensors working on the sym/skew orthogonal decomposition of the introduced deformation measures (see \cite{dagostino2019effective} for details). The need of using an enriched continuum model of the micromorphic type for describing the broadband macroscopic behavior of acoustic metamaterials as emerging from a numerical homogenization technique was recently shown in \cite{sridhar2018general}. Nevertheless, the authors of \cite{sridhar2018general} show that a huge number of elastic parameters (up to 600 for the studied tetragonal two-dimensional structures) is indeed needed to perform an accurate fitting of the dispersion curves issued by the Bloch-Floquet analysis. This extensive number of parameters can also be found in other micromorphic models of the Eringen \cite{eringen1999microcontinuum} and Mindlin \cite{mindlin1964micro} type. The need of this vast number of parameters is related to the fact that the macroscopic class of symmetry of the metamaterial and the correct (\emph{i.e.} sym/skew-decomposed) deformation measures are usually not accounted for. The fitting of the dispersion curves, which can be obtained by the procedure proposed in \cite{dagostino2019effective}, cannot reproduce point-by-point the dispersion curves issued via Bloch-Floquet analysis (which is not the aim of our work), but is general enough to capture the main features of the studied metamaterials' behavior including dispersion, anisotropy, band-gaps for a wide range of frequencies and for wavelengths which can become very small and even comparable to the size of the unit cell. A more refined fitting of the dispersion curves, as well as the possibility of including in the modeling additional symmetries via odd and higher order elasticity tensors, calls for the establishment of a new comprehensive model of the micromorphic type. Given the results presented in this paper for the particular case of tetragonal symmetry, we can claim that the relaxed micromorphic model is an important starting point for the development of such a model.  

Stemming from a variational procedure, the relaxed micromorphic model is naturally equipped with the correct macroscopic boundary conditions which have to be applied on the boundaries of the considered metamaterials. This implies that the global refractive properties of metamaterials' boundaries can be described in the simplified framework of enriched continuum mechanics, thus providing important information while keeping simple enough to allow important computational time-saving. Very complex phenomena take place when an incident elastic wave hits a metamaterial's boundary, resulting in reflected and transmitted waves which can be propagative or evanescent depending on the frequency and angle of incidence of the incident wave itself. The primordial importance of evanescent (non-propagative) waves for the correct formulation of boundary value problems for metamaterials has been highlighted in \cite{srivastava2017evanescent,willis2016negative}, where the need of infinite evanescent modes for obtaining continuity of displacement and of tractions at the considered metamaterial's boundary is pointed out. One of the advantages of the enriched continuum model proposed in the present paper is that the generalized boundary conditions (continuity of macroscopic displacement and of generalized traction) associated to the considered boundary value problem are exactly fulfilled with a finite number of modes.  

We will show that the proposed framework allows to describe, to a good extent, the overall behavior of the reflection and transmission coefficients (generated by a plane incident wave) of a metamaterial slab of finite width treated as an inclusion between two semi-infinite homogeneous media. We obtain the scattering properties of the slab as a function of the frequency and angle of incidence of the plane incident wave. To the authors' knowledge, a boundary value problem which describes the broadband and dynamical behavior of realistic finite-size metamaterial structures via the introduction of enriched macroscopic broad-angle boundary conditions, is presented here for the first time.

The results show very good agreement (for a wide range of frequencies extending from the low-frequency Cauchy limit to frequencies beyond the first band-gap and for all the possible angles of incidence) with the direct FEM numerical implementation of the considered system in COMSOL, where the detailed geometry of the unit cell has been implemented in the framework of classical linear elasticity. We observe a tremendous advantage in terms of the computational time needed to perform the numerical simulations (few hours for the relaxed micromorphic model against weeks for the direct FEM simulation).

The paper is structured as follows: In Section \ref{sec:intro} we present an introduction together with the notation used throughout the paper. Section \ref{sec:governing_eqns} briefly recalls the governing equations describing the motion of Cauchy and relaxed micromorphic continua, with specific reference to the definition of the energy flux for both cases. In Section \ref{sec:wave_prop} the plane-wave solutions for the Cauchy and relaxed micromorphic continua are obtained as solutions of the corresponding eigenvalue problems. In Section \ref{sec:BCs} we provide essential information concerning the correct boundary conditions which have to be imposed at the metamaterial's boundaries, in the relaxed micromorphic framework. In Sections \ref{sec:RTSingle} and \ref{sec:RTSlab} the problems of the scattering from a relaxed micromorphic single interface and relaxed micromorphic slab of finite size, respectively, are rigorously set up and solved. Section \ref{sec:Comsol} presents the detailed implementation of the microstructured metamaterial slab based on classical linear elasticity and implemented in the commercial Finite Element software COMSOL Multiphysics. Section \ref{sec:results} thoroughly presents our results by means of a detailed discussion. Section \ref{sec:conclusions} is devoted to conclusions and perspectives.

\subsection{Notation}
Let $\R^{3\times 3}$ be the set of all real $3\times 3$ second order tensors which we denote by capital letters. A simple and a double contraction between tensors of any suitable order is denoted by $\cdot$ and $:$ respectively, while the scalar product of tensors of suitable order is denoted by $\left\langle \cdot, \cdot \right\rangle$.\footnote{For example, $(A\cdot v)_i=A_{ij}v_j$,  $(A\cdot B)_{ik}=A_{ij}B_{jk}$, $A:B=A_{ij}B_{ji}$, $(C\cdot B)_{ijk}=C_{ijp}B_{pk}$, $(C:B)_{i}=C_{ijp}B_{pj}$, $\left\langle v,w\right\rangle = v\cdot w = v_i w_i$, $\left\langle A, B \right\rangle = A_{ij}B_{ij}$, etc.} The Einstein sum convention is implied throughout this text unless otherwise specified. The standard Euclidean scalar product on $\R^{3 \times 3}$ is given by $\left\langle X, Y \right\rangle= \tr (X\cdot Y^T)$ and consequently the Frobenius tensor norm is $\norm{X}^2 = \left\langle X, X\right\rangle$. The identity tensor on $\R^{3\times 3}$ will be denoted by $\idmat$; then, $\tr(X)=\left\langle X, \idmat \right\rangle$.

We denote by $B_L$ a bounded domain in $\R^3$, by $\d B_L$ its regular boundary and by $\Sigma$ any material surface embedded in $B_L$. The outward unit normal to $\d B_L$ will be denoted by $\nu$ as will the outward unit normal to a surface $\Sigma$ embedded in $B_L$. Given a field $a$ defined on the surface $\Sigma$, we define the jump of $a$ through the surface $\Sigma$ as:
\begin{equation}
[[a]] = a^+ - a^-, \qquad \text{with} \qquad a^{-} := \lim_{\substack{x \in B_L^{-}\setminus \Sigma \\ x \to \Sigma}} a ,\qquad \text{and} \qquad a^{+} := \lim_{\substack{x \in B_L^{+}\setminus \Sigma \\ x \to \Sigma}} a,
\end{equation} 
where $B_L^{-}, B_L^{+}$ are the two subdomains which result from splitting $B_L$ by the surface $\Sigma$. 

Classical gradient $\nabla$ and divergence $\Div$ operators are used throughout the paper.\footnote{The operators $\nabla$, $\curl$ and $\Div$ are the classical gradient, curl and divergence operators. In symbols, for a field $u$ of any order, $(\nabla u)_i=u_{,i}$, for a vector field $v$, $(\curl v)_i = \epsilon_{ijk}v_{k,j}$ and for a field $w$ of order $k>1$, $(\Div w)_{i_1 i_2\ldots i_{k-1}} = w_{i_1 i_2\ldots i_k,i_k}$
	
} Moreover, we introduce the $\Curl$ operator of the matrix $P$ as $\left(\Curl P\right)_{ij}=\epsilon_{jmn} P_{in,m}$, where $\epsilon_{jmn}$ denotes the standard Levi-Civita tensor, which is equal to $+1$, if $(j,m,n)$ is an even permutation of $(1,2,3)$, to $-1$, if $(j,m,n)$ is an odd permutation of $(1,2,3)$, or to $0$ if $j=m$, or $m=n$, or $n=j$. The subscript $,j$ indicates derivation with respect to the $j-$th component of the space variable, while the subscript $,t$ denotes derivation with respect to time.

Given a time interval $[0,T]$, the classical macroscopic displacement field is denoted by:
\begin{equation} \label{eq:displacement} 
u(x,t)=(u_1(x,t),u_2(x,t),u_3(x,t)	)^{\rm T}, \qquad x\in B_L, \quad t \in [0,T].
\end{equation}
In the framework of enriched continuum models of the micromorphic type, extra degrees of freedom are added through the introduction of the micro-distortion tensor $P$ denoted by: 
\begin{equation}\label{eq:timeharmonic2}
P(x,t) = \left(\begin{array}{ccc}
P_{11}(x,t) & P_{12}(x,t) & P_{13}(x,t)\\
P_{21}(x,t) & P_{22}(x,t) & P_{23}(x,t)\\
P_{31}(x,t) & P_{32}(x,t) & P_{33}(x,t)
\end{array}\right), \qquad x\in B_L,\,\quad t \in [0,T].
\end{equation}

\section{Governing equations and energy flux}\label{sec:governing_eqns}
\subsection{The classical isotropic Cauchy continuum}
The equations of motions in strong form for a classical Cauchy continuum are:
\begin{equation}\label{eq:Cauchystrong}
\rho \, u_{,tt} = \Div \sigma, \qquad \rho \, u_{i,tt} = \sigma_{ij,j},
\end{equation}
where 
\begin{equation}\label{eq:Cauchystress}
\sigma = 2 \mu \sym \nabla u + \lambda \tr(\sym \nabla u)\idmat, \qquad \sigma_{ij} = \mu(u_{i,j}+u_{j,i})+\lambda u_{k,k}\delta_{ij},
\end{equation}
is the classical symmetric Cauchy stress tensor for isotropic materials and $\mu$ and $\lambda$ are the classical Lam\'e parameters.

The mechanical system we are considering is conservative and, therefore, the energy must be conserved in the sense that the following differential form of a continuity equation must hold:
\begin{equation}
E_{,t}+\div H=0,\label{eq:EnergyConservation}
\end{equation}
where $E$ is the total energy of the system and $H$ is the energy flux vector, whose explicit expression is given by (see e.g. \cite{aivaliotis2018low} for a detailed derivation) 
\begin{equation}\label{eq:Cauchyflux}
H = -\sigma \cdot u_{,t}, \qquad H_i = - \sigma_{ik}\, u_{k,t}.
\end{equation}

\subsection{The anisotropic relaxed micromorphic model}
The kinetic energy density of the anisotropic relaxed micromorphic model considered here is \cite{dagostino2019effective}:
\begin{align}
J(u_{,t},\nabla u_{,t},P_{,t})=\,&\frac{1}{2}\rho\,\langle u_{,t},u_{,t} \rangle + \frac{1}{2}\langle \Jmic \sym P_{,t}, \sym P_{,t} \rangle + \frac{1}{2} \langle\Jc \skew P_{,t}, \skew P_{,t} \rangle \nonumber \\
&+ \frac{1}{2}\langle \Te \sym \nabla u_{,t}, \sym \nabla u_{,t} \rangle+ \frac{1}{2}\langle \Tc \skew \nabla u_{,t} , \skew \nabla u_{,t} \rangle, \label{eq:kineticRMM}
\end{align}
where $u$ is the macroscopic displacement field, $P\in \R^{3\times 3}$ is the non-symmetric micro-distortion tensor and $\rho$ is the apparent macroscopic density. Moreover, $\Jmic, \Jc, \Te$ and $\Tc$ are $4$th order inertia tensors, whose precise form will be specified in the following. An existence and uniqueness result for models with these types of generalized inertia terms is given in \cite{owczarek2019nonstandard}.

The strain energy density for an anisotropic relaxed micromorphic medium is given by \cite{dagostino2019effective, neff2019identification}:\footnote{We could consider more complex expressions for the curvature term, which would also include anisotropy and in fact, such expressions will be explored in further works. Here we want to show that curvature terms provide small corrections to the overall behavior of the metamaterial and so, we limit ourselves to a simplified isotropic expression.}
\begin{align}
W(\nabla u, P, \Curl P)= \,&\frac{1}{2}\langle \Ce \sym (\nabla u - P), \sym (\nabla u - P )\rangle + \frac{1}{2}\langle \Cmic \sym P,\sym P\rangle \nonumber\\
&+\frac{1}{2}\langle \Cc \skew (\nabla u - P), \skew (\nabla u - P)\rangle 
+\frac{\mu_{\rm macro}\,L_c^2}{2}\,\langle \Curl P,  \Curl P\rangle, \label{eq:strainRMM}
\end{align}
where $\Ce, \Cmic,\Cc, \Le$ and $\mathbb{L}_c$ are $4$th order elasticity tensors, whose precise form will be given in the following and $\mu_{\rm macro}$ is a scalar stiffness weighting the curvature term. Moreover, $L_c$ is a characteristic length, which may account for size-effects in the metamaterial. These size-effects are known to be important for metamaterials in the static regime (see e.g. \cite{rokovs2019micromorphic}), but are not usually explored in dynamics. 

In this paper, we start exploring the role which $L_c$ may play in dynamical size-effects. Nevertheless, these results are preliminary and new modeling tools will need to be developed in order to comprehensively unveil non-local effects arising in finite-sized metamaterials.

Let $t_0>0$ be a fixed time and consider a bounded domain $B_L \subset \R^3$. The action functional of the system at hand is defined as 
\begin{equation}\label{eq:actionfunct}
\mathcal{A}=\int_0^{t_0} \int_{B_L} (J-W) dX dt,
\end{equation}
where $J$ is the kinetic and $W$ the strain energy of the system, defined by \eqref{eq:kineticRMM} and \eqref{eq:strainRMM}, respectively. The  minimization of the action functional  provides the governing equations for the anisotropic relaxed micromorphic model \cite{dagostino2019effective}:

\begin{align}
\rho \, u_{,tt} -\Div\sigmahat_{,tt}\, &= \Div \widetilde{\sigma}, \nonumber\\
\Jmic \sym P_{,tt} = \widetilde{\sigma}_e -s -\sym \Curl m, &\qquad \Jc \skew P_{,tt} =\widetilde{\sigma}_c - \skew \Curl m,  \label{eq:governingRMMclosed}
\end{align}
where we set
\small
\begin{align}
 \quad \widetilde{\sigma}_e &=\Ce \sym (\nabla u - P), \qquad \widetilde{\sigma}_c= \Cc \skew (\nabla u - P), \qquad \widetilde{\sigma}=\widetilde{\sigma}_e+\widetilde{\sigma}_c, \qquad \sigmahat = \Te \sym \nabla u + \Tc \skew \nabla u,  \nonumber\\
s&=\Cmic \sym P, \qquad m:=\mu_{\rm macro}~L_c^2 \,\Curl P. \label{eq:relaxedRHS2}
\end{align}
\normalsize

Conservation of energy for an anisotropic relaxed micromorphic continuum is formally written as in equation \eqref{eq:EnergyConservation}. The specific form for the energy flux for an anisotropic relaxed micromorphic continuum is (see Appendix \ref{app:derivationflux} for detailed derivation of this expression):
\begin{equation}\label{eq:fluxAniso}
H = -\left(\sigmatil + \sigmahat\right)^T\cdot u_{,t} -\left( m^T \cdot P_{,t} \right):\epsilon,\quad H_k = -u_{i,t}\left(\sigmatil_{ik}+\sigmahat_{ik}\right)- m_{ih}P_{ij,t}\epsilon_{jhk}.
\end{equation}

\subsection{The plane-strain tetragonal symmetry case}\label{sec:tetragonal}
We are interested in plane-strain solutions of the system \eqref{eq:governingRMMclosed}. This means that the macroscopic displacement $u$ and the micro-distortion tensor $P$ introduced in \eqref{eq:displacement} and \eqref{eq:timeharmonic2} are supposed to take the following form:
\begin{equation}\label{eq:plane-strain-u}
u = u(x_1,x_2) = \left(u_1(x_1,x_2),u_2(x_1,x_2),0\right)^{\rm T}, \quad \text{and} \quad u_{i,3}=0, \quad i=1,2,
\end{equation}
and
\begin{equation}\label{eq:plane-strain-P}
P= P(x_1,x_2) =  \begin{pmatrix}
P_{11}(x_1,x_2)&P_{12}(x_1,x_2)&0\\
P_{21}(x_1,x_2)&P_{22}(x_1,x_2)&0\\
0&0&0\\
\end{pmatrix},~~{\rm and}~~P_{ij,3}=0, \quad i=1,2, ~j=1,2.
\end{equation}

Following \cite{dagostino2019effective}, we denote the second order constitutive tensors in Voigt notation corresponding to the fourth order ones appearing in the kinetic and strain energy expressions \eqref{eq:kineticRMM} and \eqref{eq:strainRMM} by a tilde.\footnote{For example, the $4$th order tensor $\Ce$ is written as $\Cet$ in Voigt notation.} For the tetragonal case, the constitutive tensors in Voigt notation are (see \cite{dagostino2019effective,barbagallo2017transparent}):
\small

\begin{align}
\Cet &= \left( 
\begin{array}{cccccc}
2\,\mue + \lame & \lame & \star& 0 & 0 & 0 \\
\lame & 2\,\mue + \lame & \star & 0 & 0 & 0 \\
\star & \star & \star & 0 & 0 & 0 \\
0 & 0 & 0 & \star & 0 & 0 \\
0 & 0 & 0 & 0 &\star & 0 \\
0 & 0 & 0 & 0 & 0 & \mue^{\ast} 
\end{array}\right), \quad 
\widetilde{\mathbb{C}}_{\rm micro} = 
\left( 
\begin{array}{cccccc}
2\,\mumic + \lammic & \lammic & \star & 0 & 0 & 0 \\
\lammic & 2\,\mumic + \lammic & \star & 0 & 0 & 0 \\
\star & \star & \star & 0 & 0 & 0 \\
0 & 0 & 0 & \star & 0 & 0 \\
0 & 0 & 0 & 0 & \star & 0 \\
0 & 0 & 0 & 0 & 0 & \mumic^{\ast} 
\end{array}\right), \nonumber \\
\widetilde{\mathbb{C}}_c &= \left( 
\begin{array}{ccc}
\star & 0 & 0 \\
0 & \star & 0 \\
0 & 0 & 4\,\muc 
\end{array}
\right), \qquad \qquad
\widetilde{\mathbb{J}}_{\rm micro} = 
\left( 
\begin{array}{cccccc}
2\,\eta_1 + \eta_3 & \eta_3 & \star & 0 & 0 & 0 \\
\eta_3 & 2\,\eta_1 + \eta_3 & \star & 0 & 0 & 0 \\
\star & \star & \star & 0 & 0 & 0 \\
0 & 0 & 0 & \star & 0 & 0 \\
0 & 0 & 0 & 0 & \star & 0 \\
0 & 0 & 0 & 0 & 0 & \eta_1^{\ast} 
\end{array}\right), \qquad  \qquad
\widetilde{\mathbb{J}}_c = \left( 
\begin{array}{ccc}
\star & 0 & 0 \\
0 & \star & 0 \\
0 & 0 & 4\,\eta_2 
\end{array}
\right), \nonumber \\
\widetilde{\mathbb{T}}_{e} &= 
\left( 
\begin{array}{cccccc}
2\,\bar{\eta}_1 + \bar{\eta}_3 & \bar{\eta}_3 & \star & 0 & 0 & 0 \\
\bar{\eta}_3 & 2\,\bar{\eta}_1 + \bar{\eta}_3 & \star & 0 & 0 & 0 \\
\star & \star & \star & 0 & 0 & 0 \\
0 & 0 & 0 & \star & 0 & 0 \\
0 & 0 & 0 & 0 & \star & 0 \\
0 & 0 & 0 & 0 & 0 & \bar{\eta}_1^{\ast} 
\end{array}\right), \qquad  \qquad
\widetilde{\mathbb{T}}_c = \left( 
\begin{array}{ccc}
\star & 0 & 0 \\
0 & \star & 0 \\
0 & 0 & 4\,\bar{\eta}_2 
\end{array}
\right), \nonumber
\end{align}
\normalsize
where we denoted by a star those components, which work on out-of-plane macro- and micro-strains and do not play any role in the considered plane-strain problem.

\section{Bulk wave propagation in Cauchy and relaxed micromorphic continua}\label{sec:wave_prop}
\subsection{Isotropic Cauchy continuum}
We make the plane-wave ansatz for the solution to \eqref{eq:Cauchystrong}:
\begin{equation}\label{eq:ansatzCauchy1}
u (x_1,x_2,t) = \widehat{\psi}\, e^{i(\langle k,x \rangle -\omega t)} =\widehat{\psi}\, e^{i(k_1 x_1+ k_2\, x_2 - \omega t)}, \quad \widehat{\psi} \in \C^2,
\end{equation}
where $k= (k_1,k_2)^{\rm T}$ is the wave vector and $\omega$ is the angular frequency.\footnote{As we will show in the following, $k_2$, which is the second component of the wave-number, is always supposed to be known and is given by Snell's law when imposing boundary conditions on a given boundary.} Plugging \eqref{eq:ansatzCauchy1} into equation \eqref{eq:Cauchystrong} we get a $2\times 2$ algebraic system of the form
\begin{equation}\label{eq:algebraic1}
A \cdot \widehat{\psi} = 0,  
\end{equation}
where 
\begin{equation}\label{eq:coeffsmatrixCauchy}
A = \left(\begin{array}{cc}
\rho\,\omega^2 - (2\mu + \lambda)\,k_1^2 - \mu\,k_2^2 & -(\mu + \lambda)~ k_1\,k_2 \\
-(\mu + \lambda)\,k_1\,k_2  &\rho\,\omega^2 - (2\mu + \lambda)\,k_2^2 - \mu\,k_1^2
\end{array}\right). 
\end{equation}
The algebraic system \eqref{eq:algebraic1} has a solution if and only if $\det A = 0$. This is a bi-quadratic polynomial equation which has the following four solutions:\small
\begin{equation}\label{eq:k1Cauchy}
k_1^{L,\rm r} = -\sqrt{\frac{\rho}{2\mu + \lambda}\omega^2-(k_2^{L,\rm r})^2}, \quad   k_1^{S,\rm r} = - \sqrt{\frac{\rho}{\mu}\omega^2 - (k_2^{S,\rm r})^2}, \quad k_1^{L,\rm t} = \sqrt{\frac{\rho}{2\mu + \lambda}\omega^2-(k_2^{L,\rm t})^2}, \quad   k_1^{S,\rm t} =  \sqrt{\frac{\rho}{\mu}\omega^2 - (k_2^{S,\rm t})^2},
\end{equation}\normalsize
where we denote by L and S the longitudinal and shear waves and by $\rm r$ and $\rm t$ whether they  travel towards $-\infty$ or $+\infty$, respectively.
We plug the solutions \eqref{eq:k1Cauchy} into \eqref{eq:algebraic1} to calculate the corresponding eigenvectors
\begin{equation}\label{eq:eigenvectorsCauchy}
\widehat{\psi}^{L, \rm r}=\left( \begin{array}{c}
1\\
-\frac{k_2^{L, \rm r}}{k_1^{L, \rm r}}
\end{array}\right),\widehat{\psi}^{S, \rm r}
=\left( 
\begin{array}{c}
1\\
\frac{k_1^{S, \rm r}}{k_2^{S, \rm r}}
\end{array}  \right), \quad 
\widehat{\psi}^{L, \rm t}=\left( \begin{array}{c}
1\\
\frac{k_2^{L, \rm t}}{k_1^{L, \rm t}}
\end{array}\right), \quad \widehat{\psi}^{S, \rm t}
=\left( 
\begin{array}{c}
1\\
-\frac{k_1^{S, \rm t}}{k_2^{S, \rm t}}
\end{array}  \right).
\end{equation}
Normalizing these eigenvectors gives:
\begin{equation}\label{eq:eigenvectorsCauchynormal}
\psi^{L, \rm r} := \frac{1}{\left|\widehat{\psi}^{L, \rm r}\right|}\widehat{\psi}^{L, \rm r}, \quad \psi^{S, \rm r} := \frac{1}{\left|\widehat{\psi}^{S, \rm r}\right|}\widehat{\psi}^{S, \rm r}, \quad
\psi^{L, \rm t} := \frac{1}{\left|\widehat{\psi}^{L, \rm t}\right|}\widehat{\psi}^{L, \rm t}, \quad \psi^{S, \rm t} := \frac{1}{\left|\widehat{\psi}^{S, \rm t}\right|}\widehat{\psi}^{S, \rm t}.
\end{equation}
Then, the general solution to \eqref{eq:Cauchystrong} can be written as:
\small
\begin{align}
u(x_1,x_2,t) = \,&a^{L, \rm r} \psi^{L, \rm r} e^{i \left(k_1^{L, \rm r} x_1 + k_2^{L, \rm r}\, x_2 - \omega t\right)} + a^{S, \rm r} \psi^{S, \rm r} e^{i \left( k_1^{S, \rm r} x_1 + k_2^{S, \rm r}\, x_2 - \omega t\right)}\nonumber \\
&+a^{L, \rm t} \psi^{L, \rm t} e^{i \left(k_1^{L, \rm t} x_1 + k_2^{L, \rm t}\, x_2 - \omega t\right)} + a^{S, \rm t} \psi^{S, \rm t} e^{i \left( k_1^{S, \rm t} x_1 + k_2^{S, \rm t}\, x_2 - \omega t\right)} \label{eq:planewavesolution},
\end{align}
\normalsize
where $a^{L, \rm r},a^{S, \rm r},a^{L, \rm t}, a^{S, \rm t} \in \C$ are constants to be determined from the boundary conditions, $k_1^{L, \rm t}=-k_1^{L, \rm r}$ and $k_1^{S, \rm t}=-k_1^{S, \rm r}$, according to \eqref{eq:k1Cauchy}. Depending on the specific problems which are considered (e.g. semi-infinite media), only some modes may propagate in specific directions. In this case, some of the terms in the sum \eqref{eq:planewavesolution} have to be omitted. In particular, if we are considering waves propagating in the $x_1<0$ half-space, then the solution to \eqref{eq:Cauchystrong} reduces to:\footnote{
This choice for the sign of $k_1$ always gives rise to a solution which verifies conservation of energy at the interface. This particular choice, which at a first instance is rather intuitive, is not always the correct one when dealing with a relaxed micromorphic medium. }
\small
\begin{equation}\label{eq:planewavesolutionminus}
u(x_1,x_2,t) = a^{L, \rm r} \psi^{L, \rm r} e^{i \left(k_1^{L, \rm r} x_1 + k_2^{L, \rm r}\, x_2 - \omega t\right)} + a^{S, \rm r} \psi^{S, \rm r} e^{i \left( k_1^{S, \rm r} x_1 + k_2^{S, \rm r}\, x_2 - \omega t\right)},
\end{equation}
\normalsize
while if we are considering the $x_1>0$ half-space, then the solution to \eqref{eq:Cauchystrong} reduces to:
\small
\begin{equation}\label{eq:planewavesolutionplus}
u(x_1,x_2,t) = a^{L, \rm t} \psi^{L, \rm t} e^{i \left(k_1^{L, \rm t} x_1 + k_2^{L, \rm t}\, x_2 - \omega t\right)} + a^{S, \rm t} \psi^{S, \rm t} e^{i \left( k_1^{S, \rm t} x_1 + k_2^{S, \rm t}\, x_2 - \omega t\right)}.
\end{equation}
\normalsize
\subsection{Relaxed micromorphic continuum}\label{sec:wavepropRMM}
We start by collecting the unknown fields for the plane-strain case in a new variable:
\begin{equation}\label{eq:RMMnewvariable}
	v: = (u_1,u_2,P_{11},P_{12},P_{21},P_{22})^{\rm T}.
\end{equation}
The plane-wave ansatz for this unknown field reads:
\begin{equation}\label{eq:RMMplanewave}
v = \widehat{\phi}\, e^{i\left(\langle k,x\rangle -\omega t\right)} = \widehat{\phi}\, e^{i\left(k_1\,x_1 +k_2 x_2-\omega\, t\right)},
\end{equation}
where $\widehat{\phi}\in \C^6$ is the vector of amplitudes, $k = (k_1,k_2)^{\rm T} \in \C^2$ is the wave-vector\footnote{Here again, as in the case of a Cauchy medium, $k_2$ will be fixed and given by Snell's law when imposing boundary conditions.} and $\omega$ is the angular frequency. We plug the wave-form \eqref{eq:RMMplanewave} into \eqref{eq:governingRMMclosed} and get an algebraic system of the form 
\begin{equation}\label{eq:algebraic2}
\widehat{A}(k_1,k_2,\omega)\cdot \widehat{\phi} = 0,
\end{equation}
where $\widehat{A}(k_1,k_2,\omega) \in \C^{6 \times 6}$ is a matrix depending on $k_1,k_2,\omega$ and all the material parameters of the plane-strain tetragonal relaxed micromorphic model (see \ref{app:matrixA} for an explicit presentation of this matrix). In order for this system to have a solution other than the trivial one, we impose $\det \widehat{A} = 0$. 

The equation  $\det \widehat{A} = 0$ is a polynomial of order $12$ in $\omega$ and it involves only even powers of $\omega$. This means that, plotting the roots $\omega = \omega(k)$ gives $6$ dispersion curves in the $\omega-k$ plane (see Fig. \ref{fig:DispersionCurves}). On the other hand, the same polynomial is of order $8$ (and again involves only even powers of $k_1$), if regarded as a polynomial of $k_1$ ($k_2$ is supposed to be known when imposing boundary conditions). We can write the roots of the characteristic polynomial as:
\small
\begin{align}\label{eq:solutionsDetRMM}
k_1^{\op}(k_2,\omega),\quad k_1^{\twop}(k_2,\omega),& \quad k_1^{\thp}(k_2,\omega),\quad  k_1^{\fp}(k_2,\omega),  \\
k_1^{\fip}(k_2,\omega) = -k_1^{\op}(k_2,\omega),\quad k_1^{\sixp}(k_2,\omega) =  -k_1^{\thp}(k_2,\omega), &\quad k_1^{\sevp}(k_2,\omega) = - k_1^{\thp}(k_2,\omega), \quad k_1^{\eip}(k_2,\omega) = - k_1^{\fp}(k_2,\omega). \nonumber
\end{align}
\normalsize

We plug the solutions \eqref{eq:solutionsDetRMM} into \eqref{eq:algebraic2} and calculate the eigenvectors of $\widehat{A}$, which we denote by: $
\widehat{\phi}^{\op},\quad \widehat{\phi}^{\twop}, \quad \widehat{\phi}^{\thp}, \quad \widehat{\phi}^{\fp}, \quad \widehat{\phi}^{\fip}, \quad \widehat{\phi}^{\sixp}, \quad \widehat{\phi}^{\sevp}, \quad \widehat{\phi}^{\eip}. 
$
We normalize these eigenvectors, thus introducing the normal vectors
\begin{align}\label{eq:eigenvectorsRMMnormal}
&\phi^{\op} := \frac{1}{|\widehat{\phi}^{\op}|}\widehat{\phi}^{\op}, \quad \phi^{\twop} := \frac{1}{|\widehat{\phi}^{\twop}|}\widehat{\phi}^{\twop}, \quad \phi^{\thp} := \frac{1}{|\widehat{\phi}^{\thp}|}\widehat{\phi}^{\thp}, \quad \phi^{\fp} := \frac{1}{|\widehat{\phi}^{\fp}|}\widehat{\phi}^{\fp}, \nonumber \\
&\phi^{\fip} := \frac{1}{|\widehat{\phi}^{\fip}|}\widehat{\phi}^{\fip}, \quad \phi^{\sixp} := \frac{1}{|\widehat{\phi}^{\sixp}|}\widehat{\phi}^{\sixp}, \quad \phi^{\sevp} := \frac{1}{|\widehat{\phi}^{\sevp}|}\widehat{\phi}^{\sevp}, \quad \phi^{\eip} := \frac{1}{|\widehat{\phi}^{\eip}|}\widehat{\phi}^{\eip}.
\end{align}
Considering a micromorphic medium in which all waves travel simultaneously, the solution to \eqref{eq:governingRMMclosed} is:

\begin{equation}\label{eq:planewavesolutionRMM2}
v(x_1,x_2,t) = \sum_{j=1}^{8} \alpha_j \phi^{(j)} e^{i \left(k_1^{(j)} x_1 + k_2^{(j)}\, x_2 - \omega t\right)},
\end{equation}  
\normalsize
where $\alpha_j \in \C$ are unknown constants which will be determined from the boundary conditions.  
If, on the basis of the particular interface problem one wants to study, only some waves travel in the considered medium, then the extra waves must be omitted from the sum in \eqref{eq:planewavesolutionRMM2}. This means that if we are considering waves traveling in the $x_1>0$ direction, then the solution to \eqref{eq:governingRMMclosed} is given by \small
\begin{align}\label{eq:planewavesolutionRMM2plus}
v(x_1,x_2,t) =& \, \alpha_1\, \phi^{(1)} e^{i \left(k_1^{(1)} x_1 + k_2^{(1)}\, x_2 - \omega t\right)} + \alpha_2\,\phi^{(2)} e^{i \left(k_1^{(2)} x_1 + k_2^{(2)}\, x_2 - \omega t\right)} + \alpha_3\, \phi^{(3)} e^{i \left(k_1^{(3)} x_1 + k_2^{(3)}\, x_2 - \omega t\right)}\nonumber \\
&+ \alpha_4\,\phi^{(4)} e^{i \left(k_1^{(4)} x_1 + k_2^{(4)}\, x_2 - \omega t\right)},
\end{align}  \normalsize
while if we consider waves only traveling in the $x_1<0$ direction, the solution to \eqref{eq:governingRMMclosed} is given by \small
\begin{align}\label{eq:planewavesolutionRMM2minus}
v(x_1,x_2,t) =& \, \alpha_5\, \phi^{(5)} e^{i \left(k_1^{(5)} x_1 + k_2^{(5)}\, x_2 - \omega t\right)} + \alpha_6\,\phi^{(6)} e^{i \left(k_1^{(6)} x_1 + k_2^{(6)}\, x_2 - \omega t\right)} + \alpha_7\, \phi^{(7)} e^{i \left(k_1^{(7)} x_1 + k_2^{(7)}\, x_2 - \omega t\right)}\nonumber \\
&+ \alpha_8\,\phi^{(8)} e^{i \left(k_1^{(8)} x_1 + k_2^{(8)}\, x_2 - \omega t\right)}.
\end{align}   \normalsize

\section{Boundary Conditions}\label{sec:BCs}
We will consider two types of interface problems: (i) a \textbf{single interface} separating a Cauchy and a relaxed micromorphic medium, both assumed to be semi-infinite and (ii) a \textbf{micromorphic slab} of finite size embedded between two semi-infinite Cauchy media. In the following, we will simply denote ``single interface'' and ``micromorphic slab'' the first and second problem, respectively. 

In the single interface problem, two homogeneous infinite half-spaces are occupied by two materials in perfect contact with each other. The material on the left of the interface is an isotropic classical Cauchy medium, while the material on the right is a microstructured tetragonal metamaterial modeled by the tetragonal relaxed micromorphic model (see Fig \ref{fig:singleinterface}). 

In the micromorphic slab problem, two homgeneous infinite half-spaces are separated by a micromorphic slab of finite width $h$. Three materials are thus in perfect contact with each other: the material on the left of the first interface is a classical linear elastic isotropic Cauchy medium, the material in the middle is an anisotropic relaxed micromorphic medium, while the material on the right of the second interface is again a classical isotropic Cauchy medium (see Fig. \ref{fig:slab}). 
\vspace{0.51cm}
\begin{figure}[h]
	\hspace{-3cm}
	\begin{subfigure}{0.66\linewidth}
		\centering
		\includegraphics[trim={1cm 2cm  0cm  0cm},clip,
		scale=0.35]{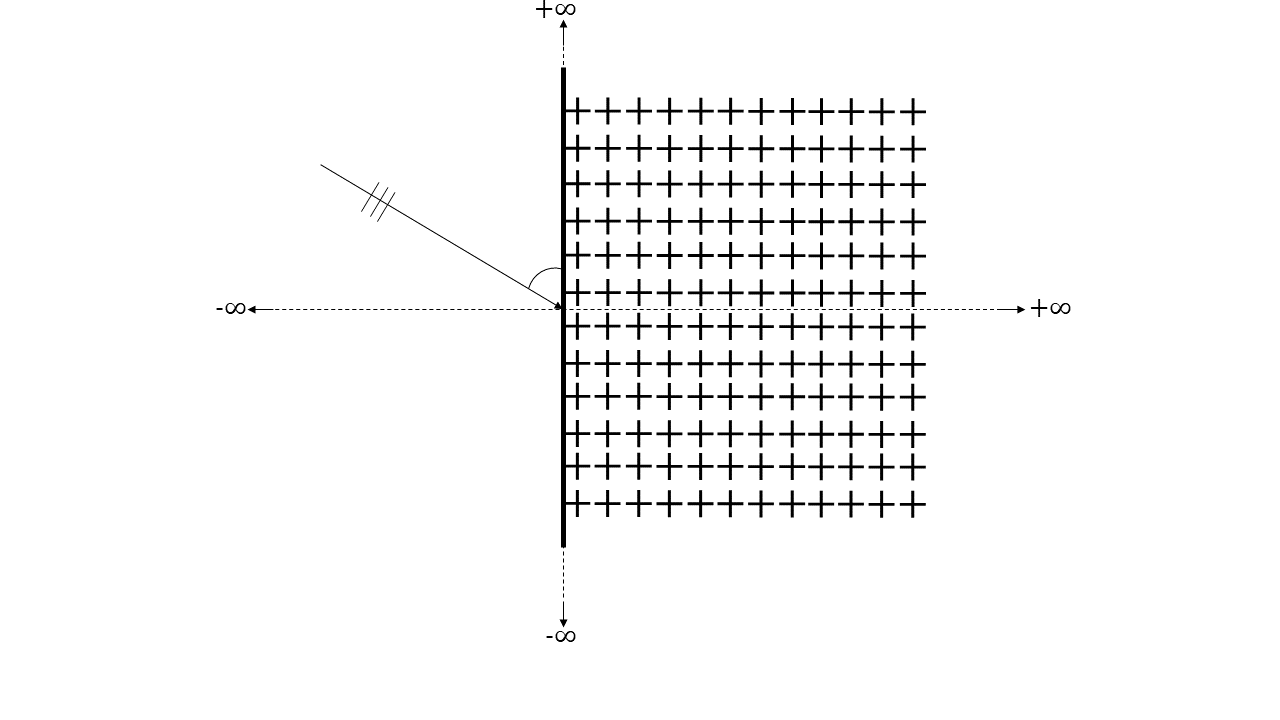}
		\begin{picture}(0,0)(0,0)
		\put(-20,110) {\small$\theta$}
		\put(-63,135) {\small$k$}
		\put(-7,170) {\small$x_2$}
		\put(105,93) {\small$x_1$}
		\end{picture}
		\caption{}
		\label{fig:singleinterface}
	\end{subfigure}
	\begin{subfigure}{0.35\linewidth}
		\centering
		\hspace*{-1cm}
		\includegraphics[
		trim={4cm 2cm  0cm  0cm},clip,
		scale=0.35]{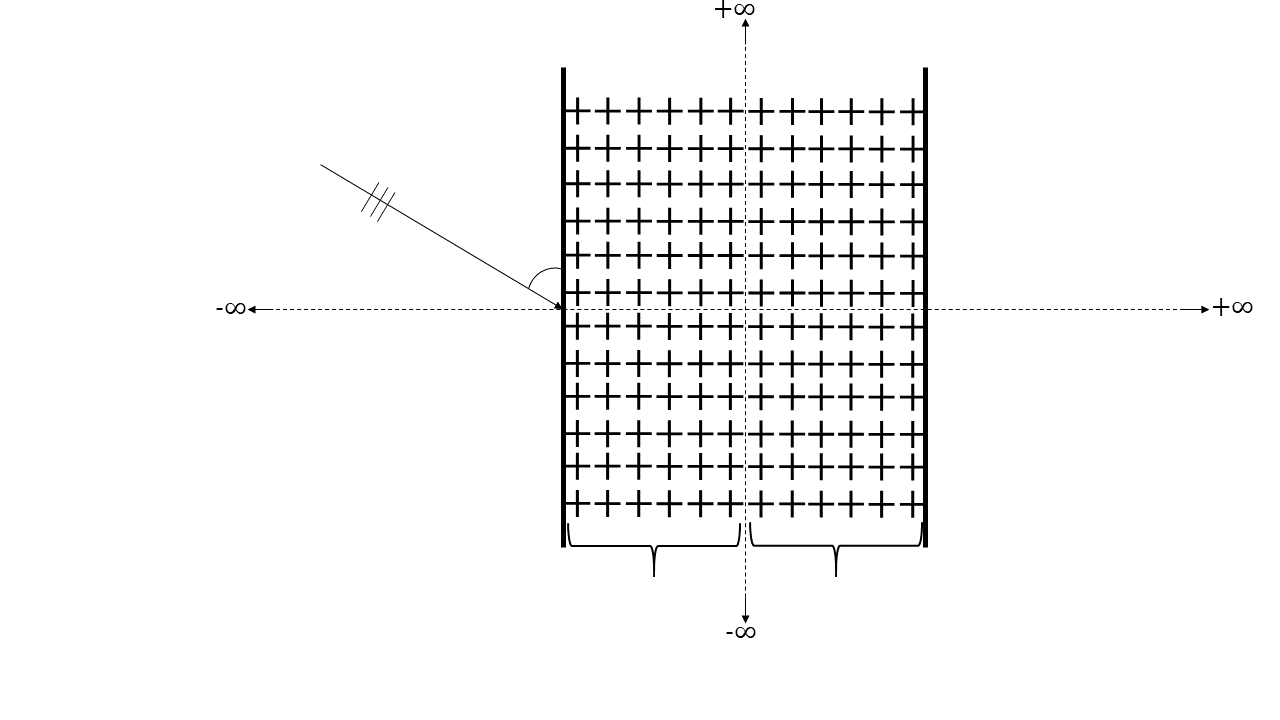}\\	
		\begin{picture}(0,0)(0,0)
		\put(-9,110) {\small$\theta$}
		\put(-52,135) {\small$k$}
		\put(55,170) {\small$x_2$}
		\put(165,93) {\small$x_1$}
		\put(24,20) {\small$\frac{h}{2}$}
		\put(72,20) {\small$\frac{h}{2}$}
		\end{picture}
		\caption{}
		\label{fig:slab}
	\end{subfigure}\hfill
	\caption{\label{fig:interfaces}  Panel (a): single interface separating a Cauchy medium from a relaxed micromorphic medium (both semi-infinite in the $x_1$ direction). Panel (b): A micromorphic slab of width $h$ between two semi-infinite elastic Cauchy media. Both configurations (a) and (b) are semi-infinite in the $x_2$ direction.}
\end{figure}

\subsection{Boundary conditions at an interface between a Cauchy continuum and a relaxed micromorphic continuum}\label{sec:BCsingle}
As is well established in previous works \cite{madeo2016reflection,aivaliotis2018low,aivaliotis2019microstructure} there are four boundary conditions which can be imposed at a Cauchy/relaxed micromorphic interface. These are continuity of displacement, continuity of generalized traction and an extra set of boundary conditions which involve only the micro-distortion tensor $P$. These additional boundary conditions stem from the least action principle, and are called \emph{free} and \emph{fixed microstructure} boundary conditions \cite{aivaliotis2019microstructure}, depending on whether we assign the dual quantity to the micro-distortion, or the micro-distortion itself. 

For the displacement, we have:
\begin{equation}\label{eq:displcont}
[[u]] =0 \Rightarrow u^{-} = u^{+}, \qquad \text{on} \quad x_1 = 0,
\end{equation}
where $u^{-}$ is the macroscopic displacement on the ``minus'' side (the $x_1<0$ half-plane, occupied by an isotropic Cauchy medium) and $u^{+}$ is the macroscopic displacement on the ``plus'' side (the $x_1>0$ half-plane, occupied by an anisotropic relaxed micromorphic medium: the first two components of the vector $v$, defined in \eqref{eq:RMMnewvariable}). As for the jump of generalized traction we have:
\begin{equation}\label{jumpforcevec}
t = \widetilde{t},
\end{equation}
where $t$ is the Cauchy traction on the ``minus'' side and $\widetilde{t}$ is the generalized traction on the ``plus'' side. We recall that in a Cauchy medium, $t = \sigma \cdot \nu$, $\nu$ being the outward unit normal to the surface and $\sigma$ being the Cauchy stress tensor given by \eqref{eq:Cauchystress}. The generalized traction for the relaxed micromorphic medium is given by
\begin{equation}\label{eq:forceRMM}
\widetilde{t} = \left(\widetilde{\sigma} + \widehat{\sigma}\right) \cdot \nu, \qquad \widetilde{t}_{i} = \left(\widetilde{\sigma}_{ij} + \widehat{\sigma}_{ij}\right)\cdot \nu_j, \qquad \text{on} \quad x_1 = 0,
\end{equation}
where $\widetilde{\sigma}, \widehat{\sigma}$ are defined in \eqref{eq:relaxedRHS2}.

In order to impose suitable boundary conditions for the micro-distortion tensor $P$, we need the concept of the double traction $\tau$ which is the dual quantity of the micro-distortion tensor $P$ and is defined as \cite{madeo2016reflection,aivaliotis2019microstructure}:
\begin{equation}\label{eq:doubleforcedef}
\tau=-m\cdot \epsilon \cdot \nu, \qquad \tau_{ij} = -m_{ik}\,\epsilon_{kjh}\,\nu_{h}
\end{equation}
where $m$ is given in \eqref{eq:relaxedRHS2}, $\epsilon$ is the Levi-Civita tensor and $\nu$ is the outward unit normal to the interface.

\subsubsection{Free microstructure}
In this case, the macroscopic displacement and generalized tractions are continuous while the microstructure of the medium is free to move along the interface \cite{madeo2016first,madeo2016reflection,madeo2017relaxed,aivaliotis2019microstructure}. Leaving the interface free to move means that $P$ is arbitrary, which, on the other hand, implies that the double traction $\tau$ must vanish. We have then:
\begin{equation}\label{eq:BCfreemicro}
[[u_i]]=0, \quad \textcolor{black}{\widetilde{t}}_i -\textcolor{black}{t}_i = 0,\quad \tau_{ij} = 0, \quad i=1,2, \text{ } j=2.
\end{equation}

\subsubsection{Fixed microstructure}
This is the case in which we impose that the microstructure on the relaxed micromorphic side does not vibrate at the interface. The boundary conditions in this case are \cite{madeo2016reflection,aivaliotis2019microstructure}:\footnote{We remark that, in the relaxed micromorphic model, only the tangent part of the double traction in \eqref{eq:BCfreemicro} or of the micro-distortion tensor in \eqref{eq:BCfixedmicro} must be assigned, see \cite{neff2015relaxed,neff2014unifying} for more details.}
\begin{equation}\label{eq:BCfixedmicro}
[[u_i]]=0,\quad \textcolor{black}{\widetilde{t}}_i -\textcolor{black}{t}_i=0,\quad P_{ij} = 0,\quad i=1,2, \text{ } j=2.
\end{equation}

\subsubsection{Continuity of macroscopic displacement and of generalized traction implies conservation of energy at the interface}
We have previously shown that conservation of energy for a bulk Cauchy and relaxed micromorphic medium is given by equation \eqref{eq:EnergyConservation}, where the energy flux is defined in \eqref{eq:Cauchyflux} and \eqref{eq:fluxAniso}, respectively. It is important to remark that the conservation of energy \eqref{eq:EnergyConservation} has a ``boundary counterpart''. This establishes that the jump of the normal part of the flux must be vanishing, or, in other words, the normal part of the flux must be continuous at the considered interface (this comes from the bulk conservation law and the use of the Gauss divergence theorem). In symbols, when considering a surface $\Sigma$ separating two continuous media, we have 
\begin{equation}\label{eq:jumpflux}
	[[H \cdot \nu]] = 0, \quad \text{on} \quad \Sigma.
\end{equation}
We want to focus the reader's attention on the fact that, in the framework of a consistent theory in which the bulk equations and boundary conditions are simultaneously derived by means of a variational principle, the jump conditions imposed on $\Sigma$ necessarily imply the surface conservation of energy \eqref{eq:jumpflux}, as far as a conservative system is considered. We explicitly show here that this is true for an interface $\Sigma$ separating a Cauchy medium from a relaxed micromorphic one. The same arguments, however, hold for interfaces between two Cauchy or two relaxed micromorphic media. 

To that end, considering for simplicity that the interface $\Sigma$ is located at $x_1=0$ (so that its normal is $\nu = (1,0)^{\rm T}$) 
 we get from equation \eqref{eq:fluxAniso} that the normal flux computed on the ``relaxed micromorphic'' side is given by:

\begin{equation}\label{eq:fluxAniso1}
	(H\cdot \nu)^{+} := H_1^{+} =-\nu^{+}\cdot \left( (\widetilde{\sigma} + \widehat{\sigma})^{\rm T} \cdot u_{,t}^{+}  -\left( m^T \cdot P_{,t} \right):\epsilon \right) \quad \text{at} \quad x_1=0.
\end{equation} 

By the same reasoning, the flux at the interface on the ``Cauchy'' side is computed from \eqref{eq:Cauchyflux} and gives:
\begin{equation}\label{eq:Cauchyflux1}
	(H\cdot \nu)^{-} := H_1^{-} = -\nu^{-} \cdot \sigma \cdot u_{,t}^{-}.
\end{equation}
Equation \eqref{eq:jumpflux} can then be rewritten as:
\begin{equation}\label{eq:jumpflux2}
-\nu^{+}\cdot \left( (\widetilde{\sigma} + \widehat{\sigma})^{\rm T} \cdot u_{,t}^{+}  -\left( m^T \cdot P_{,t} \right):\epsilon \right) = -\nu^{-} \cdot \sigma \cdot u_{,t}^{-}.
\end{equation}
It is clear that, given the jump conditions \eqref{eq:BCfreemicro} or \eqref{eq:BCfixedmicro} the latter relation is automatically verified.

As we will show in the remainder of this paper, when modeling a metamaterial's boundary via the relaxed micromorphic model, we only need a finite number of modes in order to exactly verify boundary conditions and, consequently, surface energy conservation. This provides one of the most powerful simplifications of the relaxed micromorphic model with respect to classical homogenization methods, in which infinite modes are needed to satisfy conservation of stress and displacement at the metamaterial's boundary (see \cite{willis2011effective}).

\subsection{Boundary conditions for a micromorphic slab embedded between two Cauchy media}\label{sec:BCsslab}
The boundary conditions to be satisfied at the two interfaces separating the slab from the two Cauchy media in Fig. \ref{fig:slab}, are continuity of displacement, continuity of generalized traction and a condition on the micro-distortion tensor $P$ (free or fixed microstructure). This means that we have twelve sets of scalar boundary conditions, six on each interface. The finite slab has width $h$ and we assume that the two interfaces are positioned at $x_1=-h/2$ and $x_1=h/2$, respectively. The continuity of displacement conditions to be satisfied at the two interfaces of the slab are:\footnote{We denote by $\widetilde{v}$ the first two components of the micromorphic field $v$ defined in equation \eqref{eq:RMMplanewave}.}
\begin{equation}\label{eq:jumpdisplslab}
u^{-}_i = \widetilde{v}_i, \text{ on } x_1=-\frac{h}{2},   \qquad
\widetilde{v}_i = u^{+}_i,  \text{ on } x_1=\frac{h}{2}, \qquad i = 1,2.
\end{equation}
As for the continuity of generalized traction, we have:
\begin{equation}\label{eq:jumptractionslab}
t^{-}_i = \widetilde{t}_i,  \text{ on } x_1=-\frac{h}{2}, \qquad
\widetilde{t}_i= t^{+}_i,   \text{ on } x_1=\frac{h}{2} \qquad i = 1,2,
\end{equation}
where $t^{\pm} = \sigma^{\pm}\cdot\nu^{\pm}$ are classical Cauchy tractions and $\widetilde{t}$ is again given by \eqref{eq:forceRMM}.

Furthermore, we need the additional free or fixed microstructure boundary conditions to be satisfied at both interfaces. These are given by: 
\begin{equation}\label{eq:freeslab}
\tau_{ij} = 0, \text{ on } x_1=-\frac{h}{2},   \qquad
\tau_{ij} = 0,  \text{ on } x_1=\frac{h}{2}, \qquad i=1,2,\,\, j = 2,
\end{equation}
or 
\begin{equation}\label{eq:fixedslab}
P_{ij} = 0, \text{ on } x_1=-\frac{h}{2},   \qquad
P_{ij} = 0,  \text{ on } x_1=\frac{h}{2}, \qquad i=1,2,\,\, j = 2,
\end{equation}
were the expression for the double traction $\tau$ is given in \eqref{eq:doubleforcedef}. We remark once again, that only the tangential parts of the double traction $\tau$ or the micro-distortion tensor $P$ must be assigned in order to fulfill conservation of energy. 

\section{Reflection and transmission at the single Cauchy/relaxed micromorphic interface}\label{sec:RTSingle}

In this section, we study the two-dimensional, plane-strain, time-harmonic scattering problem from an anisotropic micromorphic half-space (see equations \eqref{eq:governingRMMclosed}), occupying the region $x_1>0$ of Fig. \ref{fig:singleinterface}. With reference to Fig. \ref{fig:singleinterface}, the half-space $x_1<0$ is filled with a linear elastic Cauchy continuum, governed by  equation \eqref{eq:Cauchystrong}. Considering that reflected waves only travel in the $x_1<0$ Cauchy half-plane, only negative solutions for the $k_1$'s must be kept in equation \eqref{eq:k1Cauchy}, so that the total solution in the left half-space is 
\small
\begin{align}\label{eq:solplusSingle}
u^{-}(x_1,x_2,t) &= a^{L/S,\mathrm{i}} \psi^{L/S,\mathrm{i}} e^{i\left(\left\langle x, k^{L/S, \rm i}\right\rangle -\omega t\right)} + a^{L, \rm r} \psi^{L,\rm r} e^{i\left(\left\langle x, k^{L,\rm r}\right\rangle -\omega t\right)} + a^{S,\rm r} \psi^{S,\rm r} e^{i\left(\left\langle  x,k^{S,\rm r}\right\rangle -\omega t\right)} \nonumber \\
&=: u^{L/S,\rm i}+ u^{L,\rm r} + u^{S,\rm r},
\end{align} 
\normalsize
where we write L or S in the incident wave depending on whether the wave is longitudinal or shear and $\rm i$ and $\rm r$ in the exponents stand for ``incident'' and ``reflected''. Analogously, the solution on the right half-space, which is occupied by a relaxed micromorphic medium, is\footnote{We suppose here that, for the transmitted wave we have to consider only the positive solutions of \eqref{eq:solutionsDetRMM}. 
}
\small
\begin{align}\label{eq:solmicroSingle}
v(x_1,x_2,t) =& \, \alpha_1\, \phi^{(1)} e^{i \left(k_1^{(1)} x_1 + k_2^{(1)}\, x_2 - \omega t\right)} + \alpha_2\,\phi^{(2)} e^{i \left(k_1^{(2)} x_1 + k_2^{(2)}\, x_2 - \omega t\right)} + \alpha_3\, \phi^{(3)} e^{i \left(k_1^{(3)} x_1 + k_2^{(3)}\, x_2 - \omega t\right)}\nonumber \\
&+ \alpha_4\,\phi^{(4)} e^{i \left(k_1^{(4)} x_1 + k_2^{(4)}\, x_2 - \omega t\right)} + \alpha_5\,\phi^{(5)} e^{i \left(k_1^{(5)} x_1 + k_2^{(5)}\, x_2 - \omega t\right)},
\end{align} 
\normalsize
where 
we have kept only terms with positive $k_1$'s in the solution \eqref{eq:solutionsDetRMM}, since transmitted waves are supposed to propagate in the $x_1>0$ half-plane.
 
Since the incident wave is always propagative, the polarization and wave-vectors are given by: 
\begin{align}
\psi^{L,\mathrm{i}} &= (\sin \theta^{L}, -\cos \theta^{L})^{\rm T}, \quad k^{L} =|k^{L}| (\sin \theta^{L}, -\cos \theta^{L})^{\rm T}, \label{eq:incidentL} \\
\psi^{S,\mathrm{i}} &= (\cos \theta^S, \sin \theta^S)^{\rm T}, \quad\hspace{0.4cm} k^{S} = |k^S|(\sin \theta^S, -\cos \theta^S)^{\rm T}, \label{eq:incidentS}
\end{align}
where, according to \eqref{eq:k1Cauchy},
$|k^L|=\frac{\omega}{c_L}$ and $ |k^S|= \frac{\omega}{c_S}$,
with $c_L=\sqrt{(2\mu + \lambda)/\rho}$ and $c_S=\sqrt{\mu/\rho}$ the longitudinal and shear speeds of propagation and $ \theta^L$ and $ \theta^S$ the angles of incidence when the wave is longitudinal or shear, respectively (see Fig. \ref{fig:interfaces} and \cite{aivaliotis2019microstructure} for a more detailed exposition).

The continuity of displacement condition \eqref{eq:displcont} provides us with the generalized Snell's law for the case of a Cauchy/relaxed micromorphic interface (see \cite{aivaliotis2018low} for a detailed derivation):
\begin{empheq}[box=\fbox]{gather}
k_2^{L/S, \rm i} = k_2^{L,\rm r} = k_2^{S,\rm r} = k_2^{\op} = k_2^{\twop} = k_2^{\thp} = k_2^{\fp} = k_2^{\fip}. \label{eq:Snellmicromorphic}
\\\notag\text{\textbf{Generalized Snell's Law}}
\end{empheq}

As for the flux, the unit outward normal vector to the surface (the $x_2$ axis) is $\nu = (-1,0)$. This means that in expressions \eqref{eq:Cauchyflux} and \eqref{eq:fluxAniso} for the fluxes, we need only take into account the first component. According to our definitions \eqref{eq:Cauchyflux} and \eqref{eq:fluxAniso}, we have:
\begin{equation}\label{eq:fluxessingle}
H_1^{-}=-u_{i,t}\,\sigma_{i1}, \qquad
H_1^{+}=-v_{i,t}\left(\widetilde{\sigma}_{i1}+\widehat{\sigma}_{i1}\right) - m_{ih} P_{ij,t} \epsilon_{jh1}, \quad i=1,2, \,\, j,k=2,3.
\end{equation}
Having calculated the ``transmitted'' flux, we can now look at the reflection and transmission coefficients for the case of a Cauchy/relaxed micromorphic interface. We define\footnote{In order to  easily compute these coefficients in the numerical implementation of the code, we employ Lemma \ref{Lemma1} given in Appendix \ref{app:lemma}.}
\begin{equation}\label{eq:Js}
J^{\rm i} = \frac{1}{T}\int_0^{T} H^{\rm i}(x,t) dt, \quad J^{\rm r} = \frac{1}{T}\int_0^{T} H^{\rm r}(x,t) dt, \quad J^{\rm t} = \frac{1}{T}\int_0^{T} H^{\rm t}(x,t) dt,
\end{equation}
where $T$ is the time period of the considered harmonic waves, $H^{\rm i} =H_1^{-}(u^{L/S,\rm i})$, $H^{\rm r} = H_1^{-}(u^{L,\rm r} + u^{S,\rm r})$ and $H^{\rm t} = H_1^{+}\left(v\right)$, with $H_1^{+}$ and $H_1^{-}$ defined in \eqref{eq:fluxessingle}.

Then the reflection and transmission coefficients are 
\begin{align}\label{eq:reflcoeffsingle}
\mathcal{R}=\frac{J^{\rm r}}{J^{\rm i}}, \quad \mathcal{T}=\frac{J^{\rm t}}{J^{\rm i}}.
\end{align}

Since the system is conservative, we must have that $\mathcal{R} + \mathcal{T} = 1$.

\section{Reflection and transmission at a relaxed micromorphic slab}\label{sec:RTSlab}
As pointed out, in this case there are three media: the first Cauchy half-space, the anisotropic relaxed micromorphic slab and the second Cauchy half-space. The two Cauchy half-spaces are denoted by $-$ and $+$, while the quantities considered in the slab have their own notation. 

The solution on the first Cauchy half space is given, as in the case of a single interface, by
\begin{equation}\label{eq:solminusSlab}
u^{-}(x_1,x_2,t) = a^{L/S,\rm i} \psi^{L/S,\rm i} e^{i\left(\left\langle x, k^{L/S,\rm i}\right\rangle  -\omega t\right)} + a^{L,\rm r} \psi^{L,,\rm r} e^{i\left(\left\langle x, k^{L,,\rm r}\right\rangle -\omega t\right)} + a^{S,\rm r} \psi^{S,\rm r} e^{i\left(\left\langle  x,k^{S,\rm r}\right\rangle -\omega t\right)}.
\end{equation} 
\normalsize

In the case of a relaxed micromorphic slab, when solving the eigenvalue problem we must select and keep all the roots for $k_1$, as given in \eqref{eq:solutionsDetRMM}, both positive and negative. This is due to the fact that there are waves which transmit in the micromorphic part from the first interface ($x_1=-h/2$), upon which the incident wave hits and waves which reflect on the second interface ($x_1=h/2$). This means that the solution of the PDEs in the slab is \small
\begin{align}\label{eq:solmicroSlab} 
v(x_1,x_2,t) =\,& \alpha_1\, \phi^{\op} e^{i\left(\left\langle x,k^{\op} \right\rangle-\omega t\right)} + \alpha_2\, \phi^{\twop} e^{i\left(\left\langle x,k^{\twop} \right\rangle -\omega t\right)}+ \alpha_3\, \phi^{\thp} e^{i\left(\left\langle x,k^{\thp} \right\rangle-\omega t\right)} + \alpha_4\, \phi^{\fp} e^{i\left(\left\langle x,k^{\fp} \right\rangle-\omega t\right)}\nonumber \\
 &+ \alpha_5\, \phi^{\fip} e^{i\left(\left\langle x,k^{\fip} \right\rangle-\omega t\right)}
+ \alpha_6\, \phi^{\sixp} e^{i\left(\left\langle x,k^{\sixp} \right\rangle-\omega t\right)} + \alpha_7\, \phi^{\sevp} e^{i\left(\left\langle x,k^{\sevp} \right\rangle -\omega t\right)}+ \alpha_8\, \phi^{\eip} e^{i\left(\left\langle x,k^{\eip} \right\rangle-\omega t\right)}.
\end{align}\normalsize

Finally, the solution on the right Cauchy half-space is 
\begin{equation}\label{eq:solplusSlab}
u^{+}(x_1,x_2,t) = a^{L,\rm t} \psi^{L,\rm t} e^{i\left(\left\langle x, k^{L,\rm t}\right\rangle -\omega t\right)} + a^{S,\rm t} \psi^{S,\rm t} e^{i\left(\left\langle  x,k^{S,\rm t}\right\rangle-\omega t\right)}.
\end{equation}

The continuity of displacement conditions \eqref{eq:jumpdisplslab} again imply a generalized form of Snell's law for the case of the micromorphic slab (see \cite{aivaliotis2019microstructure}):
\begin{empheq}[box=\fbox]{gather}
k_2^{L/S,i} = k_2^{L,\rm r} = k_2^{S,\rm r} = k_2^{\op} = k_2^{\twop} = k_2^{\thp} = k_2^{\fp} = k_2^{\fip} = k_2^{\sixp} = k_2^{\sevp} = k_2^{\eip} = k_2^{L,\rm t} = k_2^{S,\rm t} \label{eq:SnellSlab}.
\\\notag\text{\textbf{Generalized Snell's law in a micromorphic slab}}
\end{empheq}

In order to define the reflection and transmission coefficients in the case of the anisotropic slab, we follow the same reasoning as for the single interface. However, in this case, the transmitted flux is defined as the flux on the right of the second interface, which is occupied by an isotropic Cauchy medium. 

In this case, the reflected flux is evaluated at $x_1 = -h/2$ and the transmitted flux at $x_1 = h/2$. Both the reflected and the transmitted fields propagate in isotropic Cauchy media, so that the quantities $J^{\rm i}, J^{\rm r},J^{\rm t}$ defined in \eqref{eq:Js} are given by (see \cite{aivaliotis2018low})
\footnotesize
\begin{align}
J^{\rm i}_{\rm slab} &= \frac{1}{2} \Re\Bigl(\Bigl[(2\mu^{-} + \lambda^{-}) \left|\psi_1^{L/S,\rm i}\right|^2 k_1^{L/S,\rm i} + \lambda^{-} \left(\psi_1^{L/S,\rm i}\right)^{*}\psi_2^{L/S,\rm i} k_2^{L/S,\rm i} \nonumber \\ 
&\hspace{4.95cm} +  \mu^{-} \left(\psi_1^{\rm i}\left(\psi_2^{L/S,\rm i}\right)^{*}k_2^{L/S,\rm i} + \left|\psi_2^{L/S,\rm i}\right|^2k_1^{L/S,\rm i}\right)\Bigr]\left|a^{L/S,\rm i}\right|^2\omega\Bigr),\label{eq:fluxISlab}\\
J^{\rm r}_{\rm slab} &= \sum_{j \in \mathcal{J}}	\frac{1}{2} \Re\left(\left[(2\mu^{-} + \lambda^{-}) \left|\psi_1^{j,\rm r}\right|^2 k_1^{j, \rm r} + \lambda^{-} \left(\psi_1^{j,\rm r}\right)^{*}\psi_2^{j,\rm r} k_2^{j,\rm i}
+ \mu^{-} \left(\psi_1^{j,\rm r}\left(\psi_2^{j,\twop}\right)^{*}k_2^{j,\rm i} \left|\psi_2^{j,\rm r}\right|^2k_1^{j,\rm r}\right)\right]\left|a^{j,\rm r}\right|^2\omega
\right), \label{eq:fluxRSlab}\\
J^{\rm t}_{\rm slab} &= \sum_{j \in \mathcal{J}}	\frac{1}{2} \Re\left(\left[(2\mu^{+} + \lambda^{+}) \left|\psi_1^{j,\rm t}\right|^2 k_1^{j,\rm t} + \lambda^{+} \left(\psi_1^{j,\rm t}\right)^{*}\psi_2^{j,\rm t} k_2^{j,\rm i} + \mu^{+} \left(\psi_1^{j,\rm t}\left(\psi_2^{j,\rm t}\right)^{*}k_2^{j,\rm i} + \left|\psi_2^{j,\rm t}\right|^2k_1^{j,\rm t}\right)\right]\left|a^{j,\rm t}\right|^2\omega\right),\hspace{-10pt}\label{eq:fluxTSlab} 
\end{align}
\normalsize
where, $\mathcal{J} = \{L,S\} $, $a^{j,\rm i}, a^{j,\rm r}, a^{j,\rm t} \in \C$ and $\psi^{j,\rm i}, \psi^{j,\twop}, \psi^{j, \rm t}\in \C^2$  with $j \in \mathcal{J}$, are the amplitudes and polarization vectors for incident, reflected and transmitted waves, respectively (see also equation \eqref{eq:eigenvectorsCauchynormal}), $\mu^{-},\lambda^{-}$ and $\mu^{+},\lambda^{+}$ are the Lam\'e parameters of the left and right Cauchy half-spaces, respectively.

The reflection and transmission coefficients for the slab can then be defined as:
\begin{align}\label{eq:reflcoeffslab}
\mathcal{R}_{\rm slab}=\frac{J^{\rm r}_{\rm slab}}{J^{\rm i}_{\rm slab}}, \qquad \mathcal{T}_{\rm slab}=\frac{J^{\rm t}_{\rm slab}}{J^{\rm i}_{\rm slab}}.
\end{align}
Since the system is conservative, we must have $\mathcal{R}_{\rm slab}+\mathcal{T}_{\rm slab} = 1$.

\section{Reflective properties of the detailed micro-structured slab}\label{sec:Comsol}
In this section, we consider the scattering of in-plane elastic waves from a slab containing cross like holes drilled in an isotropic linear elastic material (see Fig. \ref{fig:geom_scatt}(a)). The holes in the micro-structured slab are arranged according to a truncated square lattice, \emph{i.e.} a finite number $N$ of cells in the $x_1$ direction and an infinite number of cells in the $x_2$ direction. The details of the unit cell are given in Fig. \ref{fig:geom_scatt2} and Tab. \ref{tab:microstructureparams}.
\begin{figure}[h!]
	\centering
	\begin{subfigure}{0.5\linewidth}
		\centering
		\hspace*{1cm}
		\vspace{-1cm}
		\includegraphics[
		width=\textwidth,angle=0]{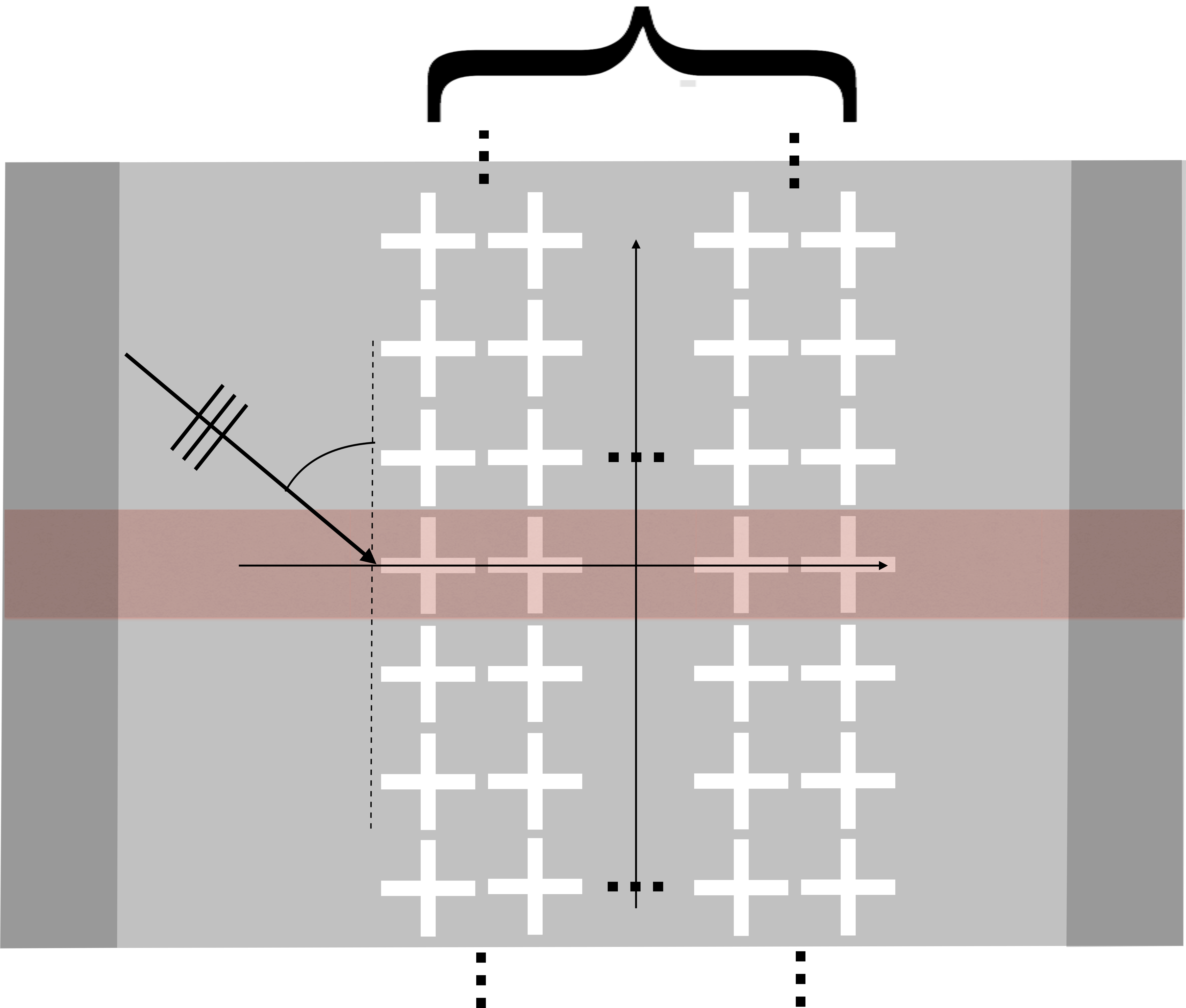}
		\begin{picture}(0,0)(0,0)
		\put(-20,80) {$\theta	$}
		\put(-45,80) {$ k_j$}
		\put(40,125) {$x_2$}
		\put(85,63) {$x_1$}
		\put(25,175) {$N=10$}
		\put(-50,65) {$\gamma$}
		\end{picture}
		\caption{}
		\label{fig:geom_scatt1}
	\end{subfigure}
	\hspace{2cm}
	\begin{subfigure}{0.3\linewidth}
		\centering
		\vspace{1cm}
		\hspace*{-0.5cm}
		\includegraphics[
		width=\textwidth]{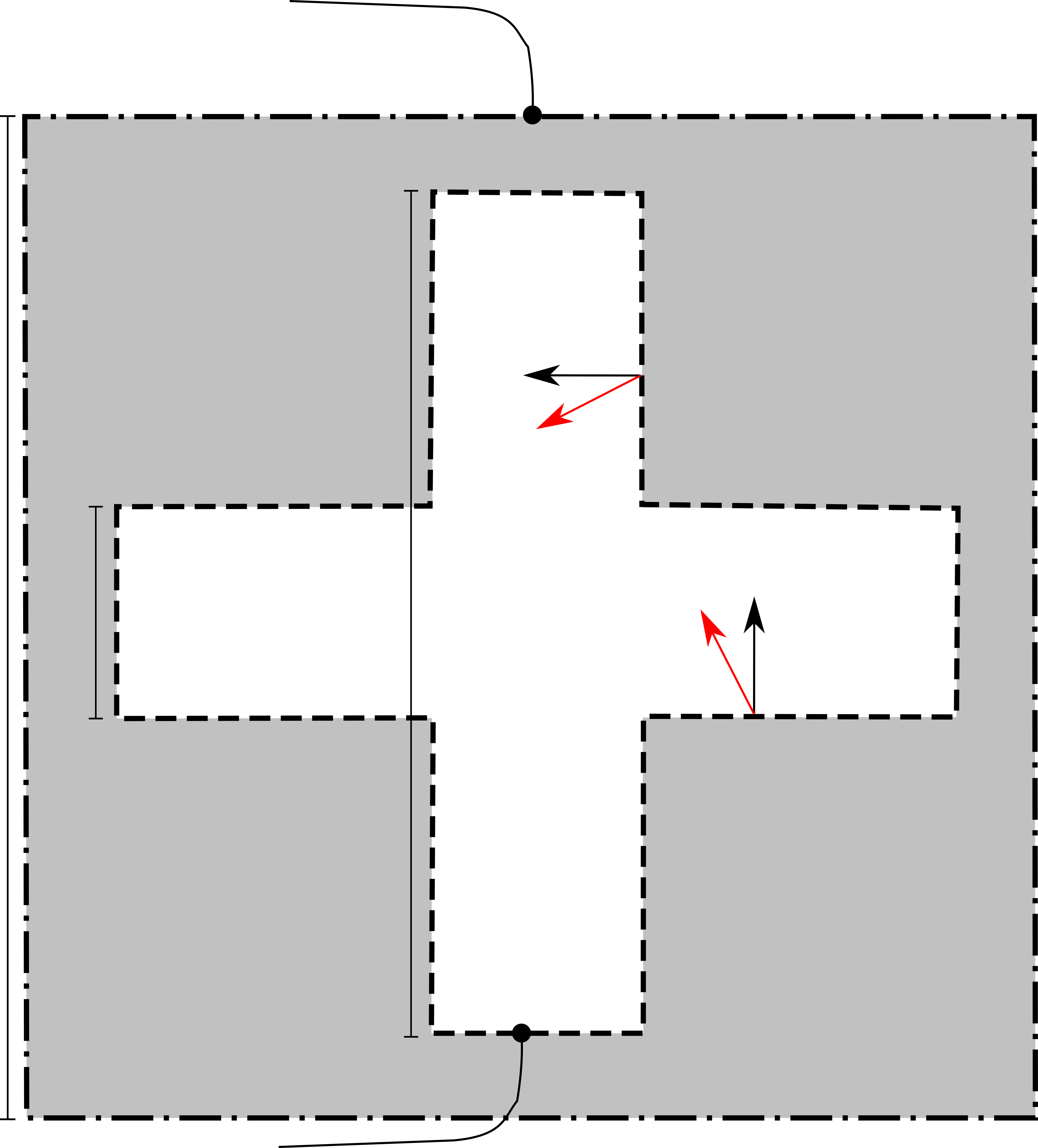}
		\begin{picture}(0,0)(0,0)
		\begin{small}
		\put(-38,65) {$\nu$}
		\put(-72,65) {$f_{hj}(x)$}
		\put(-125,145) {$\partial^{(+)}\Omega$}
		\put(-125,-5) {$\partial^{(-)}\Omega$}
		\put(-50,30) {$\Omega(n_1,n_2)$}
		\put(-144,65) {$a$}
		\put(-132,65) {$c$}
		\put(-90,65) {$b$}
		\put(-70,104) {$\nu$}
		\put(-80,88) {$f_{vj}(x)$}
		\end{small}
		\end{picture}
		\caption{}
		\label{fig:geom_scatt2}	
	\end{subfigure}\hfill
	\caption{\label{fig:geom_scatt}  Panel (a) is a schematic representation of a slab of cross like-holes which is finite in the $x_1$-direction ($N=10$ number of unit cells) and periodic in the $x_2$-direction. The red shadows represent the finite element domain $\gamma$ where the scattering problem is set up and solved. The domain $\gamma$ contains two perfectly matched layers (darker red regions at the sides of the rectangular domain $\gamma$). Panel (b) is a schematic representation of a unit cell $\Omega(n_1,n_2)$,  for a generic pair of integers $(n_1,n_2)$. The inner boundary $\partial^{(-)} \Omega(n_1,n_2)$ (dashed black line), the outer boundary $\partial^{(+)} \Omega(n_1,n_2)$ (dot-dashed black line), and the normal and traction vectors along  $\partial^{(-)} \Omega(n_1,n_2)$ (black and red arrow lines, respectively) are also shown.} 
\end{figure}

\begin{table}[H]
	\centering
	\begin{tabular}{ccc}
		$a$ & $b$ & $c$ \\[1mm]
		\hline 
		\noalign{\vskip1mm}
		$\left[\mathrm{mm}\right]$ & $\left[\mathrm{mm}\right]$ & $\left[\mathrm{mm}\right]$ \\[1mm]
		\hline 
		\hline 
		\noalign{\vskip1mm}
		$1$ & $0.9$ & $0.3$ \\[1mm]
	\end{tabular}$\qquad\qquad$%
	\begin{tabular}{ccc}
		$\mu_{\rm Al}$ & $\lambda_{\rm Al}$&$\rho_{\rm Al}$\\[1mm]
		\hline 
		\noalign{\vskip1mm}
		$\left[\mathrm{GPa}\right]$ & $\left[\mathrm{GPa}\right]$&$\left[\mathrm{kg/m^3}\right]$\\[1mm]
		\hline 
		\hline 
		\noalign{\vskip1mm}
		$26.32$ & $51.08$&$2700$  \\[1mm]
	\end{tabular}
	\caption{\label{tab:microstructureparams}Geometric parameters of the unit-cell (left), and its constitutive material (right).}
\end{table}

We consider an incident time-harmonic plane wave: 
\begin{equation}\label{eq:u_inc}
u^{j, \rm i}(x, t) = \bar{u}^{j, \rm i}(x)e^{- i\omega t} =d^j\, e^{i \langle k^j, x \rangle- i\omega t},
\end{equation}
where the index $j\in\{L,S\}$ denotes longitudinal and shear waves, respectively. Accordingly,\\ $k^j=\omega/c_j(\sin{\theta},-\cos{\theta},0)^{\rm T}$, with $j\in\{L,S\}$ and $\theta$ the angle of incidence,  $c_L=\sqrt{(\lambda_{\rm Al} + 2\mu_{\rm Al})/\rho_{\rm Al}}$ and $c_S=\sqrt{\mu_{\rm Al}/\rho_{\rm Al}}$ are the longitudinal and shear wave speeds for aluminum. The Lam\'e parameters $\lambda_{\rm Al}$, $\mu_{\rm Al}$ uniquely define the fourth order stiffness tensor $\mathbb{C}_{\rm Al}$, whose Voigt representation is 
\begin{equation}
\tilde{\mathbb{C}}_{\rm Al}=\begin{pmatrix}2\mu_{\rm Al}+\lambda_{\rm Al} & \lambda_{\rm Al} & \star & 0 & 0 & 0\\ \lambda_{\rm Al} & 2\mu_{\rm Al}+\lambda_{\rm Al} & \star & 0 & 0 & 0\\ \star & \star & \star & 0 & 0 & 0\\ 0 & 0 & 0 & \star & 0 & 0\\ 0 & 0 & 0 & 0 & \star & 0\\ 0 & 0 & 0 & 0 & 0 & \mu_{\rm Al} \end{pmatrix},
\end{equation}
where the stars denote the components which do not intervene in the plane-strain case.
In equation \eqref{eq:u_inc} we have introduced the polarization vectors $d^j$, $j\in\{L,S\}$ of amplitude $d_0$ for longitudinal and shear waves, defined as  $d^L=d_0(\sin \theta, -\cos \theta, 0)^{\rm T}$ and $d^S=d_0(\cos \theta, \sin \theta, 0)^{\rm T}$, respectively. The scattering problem in terms of the displacement field $u^j\equiv u^j(x,t)$, $j\in\{L,S\},$ in the  micro-structured material, according isotropic to linear elasticity, can be written as:\footnote{The index $j\in \{L,S\}$ for the elastic field $u^j$ indicates the fact that $u^j$ is the solution field generated by an $L$ or $S$ incident wave, respectively. Nevertheless, $u^j$ typically contains in itself coupled $L$ and $S$ waves as a result of scattering. } 
\begin{equation}\label{eq:PDEs_microstructured}
\begin{cases}
\rho_{\rm Al}~u^j_{,tt} = ~ {\rm Div}\left(\mathbb{C}_{\rm Al}~{\rm sym}~\nabla u^j \right),~~~~x \in\ \Omega_0({n_1,n_2})\\
\\
f(u^j) = 0, ~~~ x \in \partial ^{(-)} \Omega ({n_1,n_2}),~~~\forall n_1\in\{1,\cdots,N\}~~~{\rm and}~~~\forall n_2 \in \mathbb{Z},
\end{cases}
\end{equation}
where we have introduced the traction vectors
\begin{equation}\label{eq:traction}
f(u^j) = (\mathbb{C}_{\rm Al}~{\rm sym}~\nabla  u^j)\cdot \nu,~~~j\in\{L,S\},
\end{equation}  
$\nu$  being the normal unit vector (see black arrow line in Fig. \ref{fig:geom_scatt}(b))  to the cross-like holes boundaries $\partial ^{(-)} \Omega ({n_1,n_2})$ and where  we denote by $\Omega_0$ the part of the domain $\Omega$, which is non-empty (occupied by aluminum). The elastic field in \eqref{eq:PDEs_microstructured} can be written according to the scattering time-harmonic ansatz\footnote{We limit ourselves to the plane-strain case, so that we set $u^j_3=0$ and $u^j_{i,3}=0$, $j\in \{L,s\}$, $i \in \{1,2,3\}$.}     
\begin{equation}\label{eq:def_u_total}
u^j(x, t)= \left(\bar{u}^{ j,\rm i}(x)+\bar{u}^{\rm sc}(x)\right)e^{-i\omega t}, \quad j \in \{L,S\},
\end{equation}
where $ \bar{u}^{ j,\rm i}$ has been introduced in equation \eqref{eq:u_inc} and $\bar{u}^{\rm sc}$ is the so-called scattered solution, which typically contains coupled $L$ and $S$ scattered contributions. By linearity of the traction vector \eqref{eq:traction}, and using equation \eqref{eq:def_u_total}, we obtain
\begin{equation}\label{eq:def_f_total}
f(u^j(x, t)) = \left[ f(\bar{u}^{ j, \rm i}(x)) +f(\bar{u}^{\rm sc}(x))\right]e^{-i \omega t }=0, \quad j \in \{L,S\}.
\end{equation}
Using the fact that  $u^{ j,\rm i}(x,t)$ is a solution of the PDE in equation \eqref{eq:PDEs_microstructured}, together with equation \eqref{eq:def_f_total}, the PDEs system \eqref{eq:PDEs_microstructured} can be rewritten in a time-harmonic form with respect to the field $u^{ \rm sc}(x)$, as:
\begin{small}
	\begin{equation}\label{eq:PDEs_microstructured_sc}
	\begin{cases}
	-\omega^2\rho_{\rm Al}~\bar{u}^{\rm sc} = ~ {\rm Div}\left(\mathbb{C}_{\rm Al}~{\rm sym}~\nabla \bar{u}^{\rm sc} \right),~~~~x \in\ \Omega_0\\
	\\
	f(\bar{u}^{\rm sc}) \equiv (\mathbb{C}_{\rm Al}~{\rm sym}~\nabla  \bar{u}^{\rm sc})\cdot \nu= -f(\bar{u}^{ j, \rm i}), ~~~ x \in \partial ^{(-)} \Omega ({n_1,n_2}),~~~\forall n_1\in\{1,\cdots,N\}~~~{\rm and}~~~\forall n_2 \in \mathbb{Z},
	\end{cases}
	\end{equation}
\end{small}with $j\in\{L,S\}$ and where we have canceled out  time-harmonic factors. The analytical expressions for the boundary conditions for the scattered field (see right-hand side of the boundary conditions in equation \eqref{eq:PDEs_microstructured_sc}) are given by:\small
\begin{align}\label{eq:tractions-v-incident}
f(\bar{u}^{\rm sc})=f_{\rm v}^j(x)&:=  (\mathbb{C}_{\rm Al}~{\rm sym}~\nabla \bar{u}^{j,\rm i})\cdot e_1 \nonumber \\
&\hspace{0.12cm}= \frac{i\omega\rho}{c_j}\left\{\left[c_L^2\sin{\theta}\,  
d_1^{j, \rm i}
-(c_L^2-2c_S^2)\cos{\theta}\,
d_2^{j, \rm i}
\right]e_1 + c_S^2\left[ -\cos{\theta}\,
d_1^{j, \rm i}+\sin \theta \,d_2^{j, \rm i} \right]e_2 \right\}e^{i  \langle k^{j, \rm i} , x\rangle },
\end{align}\normalsize
with $j \in \{L,S\}$ for vertical boundaries  of  $\partial^{(-)} \Omega(n_1,n_2)$ with normal vector parallel to $e_1$ (see Fig. \ref{fig:geom_scatt}(b)). Similarly, for vertical boundaries with normal vector anti-parallel to $e_1$ we have $f(\bar{u}^{j, \rm i})=-f_{\rm v}^j(x)$. In addition, for horizontal  boundaries in $\partial^{(-)} \Omega(n_1,n_2)$ whose normal vector is parallel to $e_2$ we have: \small
\begin{align}\label{eq:tractions-h-incident}
\hspace{-10pt}f(\bar{u}^{\rm sc}) = f_{\rm h}^j(x)&:=  (\mathbb{C}_{\rm Al}~{\rm sym}~\nabla \bar{u}^{j,\rm i})\cdot e_2 \nonumber \\
&\hspace{0.12cm}= \frac{i\omega\rho}{c_j}\left\{ \left[(c_L^2-2c_S^2)\sin{\theta}\,15
d_1^{j, \rm i}
-c_L^2\cos{\theta}\,
d_2^{j, \rm i}
\right]e_2 + c_S^2\left[ -\cos{\theta}\,
d_1^{j, \rm i}+\sin\theta \,d_2^{j,\rm i}\right] e_1 \right\}e^{i  \langle k^{j,\rm i} , x\rangle },
\end{align}\normalsize
with $j \in \{L,S\}$.
Similarly, for vertical boundaries with normal vector anti-parallel to $e_2$ we have $f(\bar{u}^{\rm sc})=-f_{\rm h}^j(x)$.

\subsection{Bloch-Floquet conditions}
We recall that the primitive vectors of a square lattice are defined as:
\begin{equation}
a_1 = a \,e_1,\quad \text{and} \quad a_2=a \,e_2,
\end{equation}
where $a$ is the side of the unit-cell (see Fig. \ref{fig:geom_scatt}(b)). Since the scatterers (\emph{i.e.} the cross-like holes) are periodic in the $x_2$-direction, the displacement field in equation \eqref{eq:def_u_total} satisfies Bloch-Floquet boundary conditions
\begin{equation}\label{eq:def_BF_x2}
\bar{u}^{ \rm sc}(x+n_2 ~a_2) = e^{i n_2 k_2 a }\, \bar{u}^{ \rm sc}(x),~~~{\rm for}~~~x\in\gamma,~~~{\rm and}~~~n_2\in \mathbb{Z}, 
\end{equation}
where $k_2$ is the component along the $x_2$-direction of the wave vector $ k$. The value of $k_2$ is known and should be equal, given the considered geometry represented in Fig. \ref{fig:geom_scatt}(a), to the second component of the wave vector of the incident wave. This requirement is essential in order to construct a solution which satisfies the prescribed boundary conditions within a micro-structured medium which is periodic in one dimension, \emph{i.e.} in a layered micro-structured medium (see e.g. \cite{platts2002two}). The conservation of the value of $k_2$ provides the mathematical justification of Snell's law governing the refraction of waves at the interface between two half-spaces with different material parameters (see the book by Leckner \cite{leckner2016theory} for a mathematical introduction encompassing several physical scenarios). As it is customary in Floquet theory of PDEs with periodic coefficients, we can obtain the solution of the PDEs system \eqref{eq:PDEs_microstructured_sc}, by solving the problem in its period (\emph{i.e.} the red strip in Fig. \ref{fig:geom_scatt}(a) here denoted as $\gamma$) provided that the Bloch-Floquet condition \eqref{eq:def_BF_x2} on the scattered field is satisfied.    

Although the $x_1$ extension of the homogeneous part of the domain $\gamma$ is infinite in our model problem, in the finite-element implementation of the boundary value problem we are of course restricted to finite computational domains. In order to annihilate the reflection from the sides of $\gamma$ with constant $x_1$, we use perfectly matched layers \cite{basu2003perfectly} away from the microstructure.

\subsection{Reflectance}
The time-averaged Poynting vector associated with a 2D time-harmonic displacement field\\ $u(x,t)=\bar{u}(x)\,e^{-i\omega t}$ is defined as \cite{auld1973acoustic1,aivaliotis2018low}
\begin{equation}\label{eq:flux_C}
F=-\frac{\omega}{2} \,{\rm Re} \left(i \sigma \cdot{u}^*\right),
\end{equation}\\
where ``$*$'' denotes complex conjugation and $\sigma=\mathbb{C}_{\rm Al}\sym \nabla  u$ is the Cauchy stress tensor associated with the elastic field $u$. From equation \eqref{eq:flux_C}, and using equation \eqref{eq:u_inc}, it follows that the energy flux associated with the incident displacement field \eqref{eq:u_inc} is 
\begin{equation}
{ F}^{j,\rm i}=\frac{1}{2} \rho\, c_j |d^{j,\rm i}|^2 \frac{ k^{j,\rm i}}{|k^{j,\rm i}|}, \quad j\in\{L,S\}.
\end{equation}
Similarly, we define the flux $F^{ \rm sc}$ of the scattered field to be as in equation \eqref{eq:flux_C} with\\ $u(x,t)=\bar{u}^{ \rm sc}(x)e^{-i\omega t}$, where $\bar{u}^{ \rm sc}$ is the solution of the  PDEs system \eqref{eq:PDEs_microstructured_sc}). The reflectance, \emph{i.e.} the ratio of reflected energy and incident energy passing through a vertical line of length $a$,  is 
\begin{align}\label{eq:def_refl}
{\cal R}^j &= \frac{1}{ \langle F^{j,\rm i}, e_1 \rangle }  \frac{1}{a}\int_{-a/2}^{a/2} \langle \left.{ F}^{\rm sc}\right|_{x_1\ll -aN/2}, e_1\rangle\,{\rm d}x_2,\nonumber\\
&=\frac{2}{\rho\, c_j |d^{j,\rm i}|^2 \sin(\theta)} \frac{1}{a}\int_{-a/2}^{a/2} \langle \left.F^{\rm sc}\right|_{x_1\ll -aN/2}, e_1\rangle\,{\rm d}x_2,~~~j\in\{L,S\}.
\end{align}
The reflectance associated with incident shear waves (${\cal R}^S$) differs from that associated with incident longitudinal waves (${\cal R}^{L}$). The Poynting vector is evaluated at a given  $x_1\ll -a N/ 2$ away from the slab to avoid the contribution from the near elastic field close to the microstructure. Provided the condition $x_1\ll -a N/ 2$ is satisfied,  we have verified that the reflectance \eqref{eq:def_refl} does not depend on the exact value of $x_1$.

\section{Results and discussion}\label{sec:results}
In this section we present the comparison between the refractive behavior of the finite microstructured metamaterial's slab (see Fig.\ref{fig:geom_scatt}) and the relaxed micromorphic model (see Fig. \ref{fig:interfaces}). We also provide the results concerning the relaxed micromorphic single interface, which will be seen as an average behavior with respect to the micromorphic slab of finite size. To this end, we chose the material parameters of the relaxed micromorphic model as in Table \ref{tab:Numerical values}.

\begin{table}[H]
	\begin{centering}
		\begin{tabular}{ccccc}
			\hspace{-1cm}
			\begin{tabular}{ccc}
				$\lambda_e$ & $\mu_{e}$ & $\mu_{e}^{*}$\tabularnewline[1mm]
				\hline 
				\noalign{\vskip1mm}
				$\left[\textrm{GPa}\right]$ & $\left[\textrm{GPa}\right]$ & $\left[\textrm{GPa}\right]$\tabularnewline[1mm]
				\hline 
				\hline 
				\noalign{\vskip1mm}
				$2.33$ & $10.92$ & $0.67$\tabularnewline[1mm]
			\end{tabular} &   %
			\begin{tabular}{ccc}
				$\lambda_{\mathrm{micro}}$ & $\mu_{\mathrm{micro}}$ & $\mu_{\mathrm{micro}}^{*}$\tabularnewline[1mm]
				\hline 
				\noalign{\vskip1mm}
				$\left[\textrm{GPa}\right]$ & $\left[\textrm{GPa}\right]$ & $\left[\textrm{GPa}\right]$\tabularnewline[1mm]
				\hline 
				\hline 
				\noalign{\vskip1mm}
				$5.27$ & $12.8$ & $8.33$\tabularnewline[1mm]
			\end{tabular}&  
			\begin{tabular}{c}
				$\mu_{c}$ \tabularnewline[1mm]
				\hline 
				\noalign{\vskip1mm}
				$\left[\textrm{GPa}\right]$\tabularnewline[1mm]
				\hline 
				\hline 
				\noalign{\vskip1mm}
				$2.28\cdot10^{-3}$\tabularnewline[1mm]
			\end{tabular}& 	\begin{tabular}{c}
				$L_{c}$ \tabularnewline[1mm]
				\hline 
				\noalign{\vskip1mm}
				$\left[\textrm{m}\right]$\tabularnewline[1mm]
				\hline 
				\hline 
				\noalign{\vskip1mm}
				$4.41 \cdot 10^{-6}$\tabularnewline[1mm]
			\end{tabular}&
			\begin{tabular}{ccc}
				$\lambda_{\mathrm{macro}}$ & $\mu_{\mathrm{macro}}$ & $\mu_{\mathrm{macro}}^{*}$\tabularnewline[1mm]
				\hline 
				\noalign{\vskip1mm}
				$\left[\textrm{GPa}\right]$ & $\left[\textrm{GPa}\right]$ & $\left[\textrm{GPa}\right]$\tabularnewline[1mm]
				\hline 
				\hline 
				\noalign{\vskip1mm}
				$1.74$ & $5.89$ & $0.62$\tabularnewline[1mm]
			\end{tabular}\tabularnewline
		\end{tabular}
		
		\begin{tabular}{ccccccccc}
			$\rho$ & $\eta_{1}$ & $\eta_{2}$ & $\eta_{3}$ & $\eta_{1}^{*}$&$\overline{\eta}_{1}$ & $\overline{\eta}_{2}$ & $\overline{\eta}_{3}$ & $\overline{\eta}_{1}^{*}$\tabularnewline[1mm]
			\hline 
			\noalign{\vskip1mm}
			$\left[\mathrm{kg/m}^{3}\right]$ & $\left[\mathrm{kg/m}\right]$ & $\left[\mathrm{kg/m}\right]$ & $\left[\mathrm{kg/m}\right]$ & $\left[\mathrm{kg/m}\right]$&$\left[\mathrm{kg/m}\right]$ & $\left[\mathrm{kg/m}\right]$ & $\left[\mathrm{kg/m}\right]$ & $\left[\mathrm{kg/m}\right]$\tabularnewline[1mm]
			\hline 
			\hline 
			\noalign{\vskip1mm}
			$1485$ & $8.6\cdot10^{-5}$ & $10^{-7}$ & $-2.2\,\cdot10^{-5}$ & $3.3\cdot10^{-5}$&$5.6 \cdot 10^{-5}$ & $7.3 \cdot10^{-4}$  & $1.7 \cdot 10^{-4}$& $9\cdot10^{-7}$\tabularnewline[1mm]
		\end{tabular}

		\par\end{centering}
		\caption{\label{tab:Numerical values}Summary of the numerical values for the elastic (top) and inertia (bottom) parameters of the tetragonal relaxed micromorphic model in the 2D plane-strain case. The macroscopic parameters of the resulting homogenized Cauchy material (see \cite{dagostino2019effective, aivaliotis2018low}) are also provided (top right).}
\end{table}

We will show that the performed fitting allows the description of the average reflectance of the metamaterial slab even if the Bloch-Floquet patterns are not exactly reproduced point by point (see Fig. \ref{fig:DispersionCurves}). The micromorphic modeling framework presented here will need further generalizations in order to exactly match these patterns. Nevertheless, the astonishingly pertinent results presented in section \ref{sec:slabscattering} clearly indicate that the present paper must be used as a starting stepping stone for a new mindset in the modeling of metamaterials, which will enable the design of complex large-scale meta-structures.  

The choice of the metamaterial parameters is made according to the procedure presented in \cite{dagostino2019effective,neff2019identification}, which allows to determine the parameters of the relaxed micromorphic model on a specific metamaterial by an inverse approach. This fitting procedure is based on the determination of the elastic parameters of the relaxed micromorphic model via numerical static tests on the unit cell and of the remaining inertia parameters via a simple inverse fitting of the dispersion curves on the analogous dispersion patterns as obtained by Bloch-Floquet analysis (see \cite{dagostino2019effective,neff2019identification} for details). The fitting of the bulk dispersion curves for the periodic metamaterial, whose unit cell is shown in Fig. \ref{fig:geom_scatt2}, is presented in Fig. \ref{fig:DispersionCurves}.

\begin{figure}[H]
	\vspace{0.4cm}
	\begin{subfigure}{.5\textwidth}
		\centering
		\includegraphics[scale=0.6]{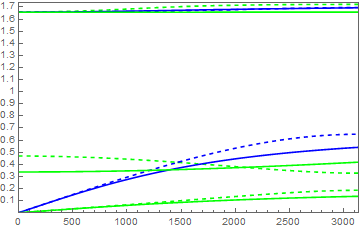}
		\vspace{0.3cm}
		\caption{}
		\vspace{-0.3cm}
		\label{fig:DispersionCurvesNormal}
	\end{subfigure}%
	\begin{subfigure}{.55\textwidth}
		\centering
		\includegraphics[scale=0.6]{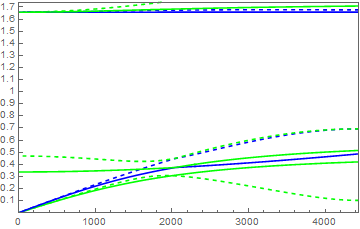}
			\vspace{0.3cm}
		\caption{}
		\vspace{-0.3cm}
		\label{fig:DispersionCurves45}
	\end{subfigure}
	\caption{\small Dispersion diagrams for normal (a) and 45 degrees (b) incidence. The solid curves are obtained via the tetragonal anisotropic relaxed micromorphic model, while the dashed curves are issued by Bloch-Floquet analysis. In panel (a), green color stands for modes which are mostly activated by a shear incident wave, while blue color indicates modes which are mostly activated by a longitudinal incident wave. This uncoupling between L and S activated modes at normal incidence is analytically checked in the relaxed micromorphic model and only approximate for Bloch-Floquet modes. In panel (b), we keep the same coloring, but all curves are coupled together, which means that L and S incident waves may simultaneously activate all modes at a given frequency.	}
	\label{fig:DispersionCurves}
	\begin{picture}(0,0)(0,0)
	\put(50,254) {\scriptsize Dispersion curves for normal incidence}
	\put(-7,145) {\rotatebox{90}{\scriptsize{Wave frequency $[10^7 \text{rad/s}]$}}}
	\put(63,105){\scriptsize{Wave number $\kabs = \sqrt{k_1^2+k_2^2}$}}
	\put(280,254) {\scriptsize Dispersion curves for $45$ degrees incidence}
	\put(232,145) {\rotatebox{90}{\scriptsize{Wave frequency $[10^7 \text{rad/s}]$}}}
	\put(300,105){\scriptsize{Wave number $\kabs = \sqrt{k_1^2+k_2^2}$}}
	\end{picture}
\end{figure}

We checked that a sort of distinction between modes which are activated by an L or S incident wave can be made for an incident wave which is orthogonal to the interface. This is exactly true for the relaxed micromorphic model, but an analogous trend can be found for Bloch-Floquet modes at least for the lower frequency modes (before the band-gap). The uncoupling between L and S activated modes is present only for $\theta =0$ (and, by symmetry, $\theta = \pi/2$), but is lost for any other direction of propagation. In general, for any given frequency, all modes which are pertinent at that frequency may be simultaneously activated by an L or S incident wave (excluding the particular case of normal incidence). Nevertheless, we will show that this uncoupling hypothesis can be retained with little error for angles of incidence which are close to normal incidence.


\subsection{Scattering at a relaxed micromorphic slab}\label{sec:slabscattering}

We start by comparing the reflection coefficient of the relaxed micromorphic slab as a function of the frequency for two fixed directions of propagation of the incident wave ($\theta= \frac{\pi}{2}$ and $\theta= \frac{\pi}{4}$) and for both longitudinal and shear incident waves with the fully resolved linear elastic microstructured model.

\begin{figure}[h!]
	\hspace{-0.7cm}
	\begin{tabular}{c l}
		\includegraphics[scale=0.55]{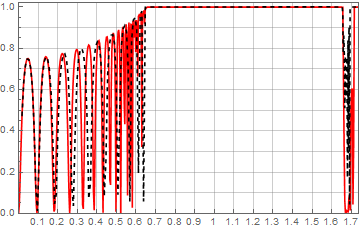}
		&
		\hspace{0.5cm}
		\includegraphics[scale=0.55]{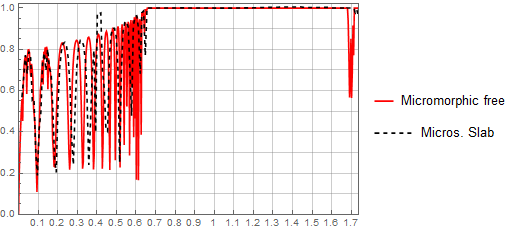}
		\\
		(a) & \hspace{3.7cm} \vspace{-0.2cm} (b)
	\end{tabular}
	\caption{\small Reflection coefficient at the relaxed micromorphic slab for an incident L wave and for two directions of propagation $\theta= \pi/2$ (normal incidence) (a) and $\theta= \pi/4$ (b) and for the free microstructure boundary condition. The red curve is generated by the analytical tetragonal relaxed micromorphic model and the black dashed line indicates the fully resolved linear elastic microstructured model.}
	\label{fig:ReflSlabPfr}
	\begin{picture}(0,0)(0,0)
	\put(-3,210) {\scriptsize Reflection coefficient, L $\pi/2$ incidence, free microstr.}
	\put(-20,115) {\rotatebox{90}{\scriptsize{Reflection coefficient}}}
	\put(50,75){\scriptsize{Wave frequency $[10^7 \text{rad/s}]$}}
	\put(226,210) {\scriptsize Reflection coefficient, L $\pi/4$ incidence, free microstr.}
	\put(208,115) {\rotatebox{90}{\scriptsize{Reflection coefficient}}}
	\put(280,75){\scriptsize{Wave frequency $[10^7 \text{rad/s}]$}}
	\end{picture}
\end{figure}

\begin{figure}[h]
		\hspace{-0.7cm}
	\begin{tabular}{c l}
		\includegraphics[scale=0.55]{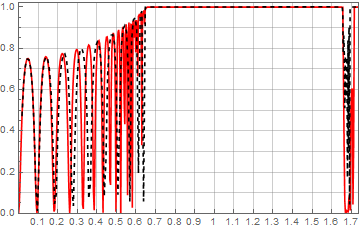}
		&
		\hspace{0.5cm}
		\includegraphics[scale=0.55]{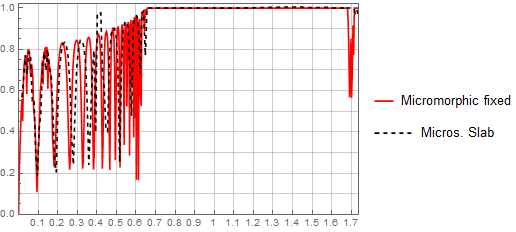}
		\\
		(a) & \hspace{3.7cm} \vspace{-0.2cm} (b)
	\end{tabular}
	\caption{\small Reflection coefficient at the relaxed micromorphic slab for an incident L wave and for two directions of propagation $\theta= \pi/2$ (normal incidence) (a) and $\theta= \pi/4$ (b) and for the fixed microstructure boundary condition. The red curve is generated by the analytical tetragonal relaxed micromorphic model and the black dashed line indicates the fully resolved linear elastic microstructured model.}
	\label{fig:ReflSlabPfi}
	\begin{picture}(0,0)(0,0)
	\put(-3,210) {\scriptsize Reflection coefficient, L $\pi/2$ incidence, fixed microstr.}
	\put(-20,115) {\rotatebox{90}{\scriptsize{Reflection coefficient}}}
	\put(50,75){\scriptsize{Wave frequency $[10^7 \text{rad/s}]$}}
	\put(226,210) {\scriptsize Reflection coefficient, L $\pi/4$ incidence, fixed microstr.}
	\put(208,115) {\rotatebox{90}{\scriptsize{Reflection coefficient}}}
	\put(280,75){\scriptsize{Wave frequency $[10^7 \text{rad/s}]$}}
	\end{picture}
\end{figure}

\begin{figure}[h]
	\hspace{-0.7cm}
	\begin{tabular}{c l}
		\includegraphics[scale=0.55]{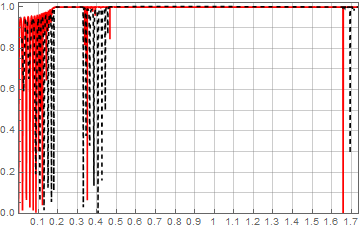}
		&
		\hspace{0.5cm}
		\includegraphics[scale=0.55]{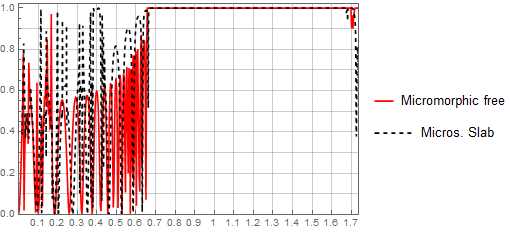}
		\\
		(a) & \hspace{3.7cm} \vspace{-0.2cm} (b)
	\end{tabular}
	\caption{\small Reflection coefficient at the relaxed micromorphic slab for an incident S wave and for two directions of propagation $\theta= \pi/2$ (normal incidence) (a) and $\theta= \pi/4$ (b) and for the free microstructure boundary condition. The red curve is generated by the analytical tetragonal relaxed micromorphic model and the black dashed line indicates the fully resolved linear elastic microstructured model.}
	\label{fig:ReflSlabSfr}
	\begin{picture}(0,0)(0,0)
	\put(-3,210) {\scriptsize Reflection coefficient, S $\pi/2$ incidence, free microstr.}
	\put(-20,115) {\rotatebox{90}{\scriptsize{Reflection coefficient}}}
	\put(50,75){\scriptsize{Wave frequency $[10^7 \text{rad/s}]$}}
	\put(226,210) {\scriptsize Reflection coefficient, S $\pi/4$ incidence, free microstr.}
	\put(208,115) {\rotatebox{90}{\scriptsize{Reflection coefficient}}}
	\put(280,75){\scriptsize{Wave frequency $[10^7 \text{rad/s}]$}}
	\end{picture}
\end{figure}

\begin{figure}[h!]
	\hspace{-0.7cm}
	\begin{tabular}{c l}
		\includegraphics[scale=0.55]{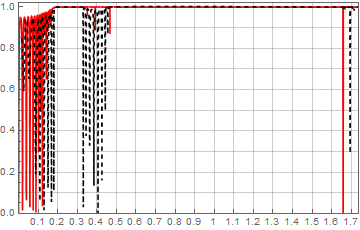}
		&
		\hspace{0.5cm}
		\includegraphics[scale=0.55]{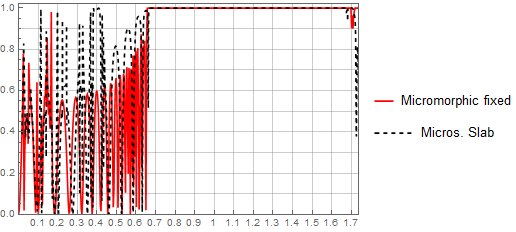}
		\\
		(a) & \hspace{3.7cm} \vspace{-0.2cm} (b)
	\end{tabular}
	\caption{\small Reflection coefficient at the relaxed micromorphic slab for an incident S wave and for two directions of propagation $\theta= \pi/2$ (normal incidence) (a) and $\theta= \pi/4$ (b) and for the fixed microstructure boundary condition. The red curve is generated by the analytical tetragonal relaxed micromorphic model and the black dashed line indicates the fully resolved linear elastic microstructured model.}
	\label{fig:ReflSlabSfi}
	\begin{picture}(0,0)(0,0)
	\put(-3,210) {\scriptsize Reflection coefficient, S $\pi/2$ incidence, fixed microstr.}
	\put(-20,115) {\rotatebox{90}{\scriptsize{Reflection coefficient}}}
	\put(50,75){\scriptsize{Wave frequency $[10^7 \text{rad/s}]$}}
	\put(226,210) {\scriptsize Reflection coefficient, S $\pi/4$ incidence, fixed microstr.}
	\put(208,115) {\rotatebox{90}{\scriptsize{Reflection coefficient}}}
	\put(280,75){\scriptsize{Wave frequency $[10^7 \text{rad/s}]$}}
	\end{picture}
\end{figure}
Figures \ref{fig:ReflSlabPfr} and \ref{fig:ReflSlabPfi} show the behavior of the reflection coefficient for the considered microstructured slab in the case of a longitudinal incident wave for the two boundary conditions (free and fixed microstructure) at normal incidence (Fig. \ref{fig:ReflSlabPfr}(a), \ref{fig:ReflSlabPfi}(a)) as well as for incidence at $45^\circ$  (Fig. \ref{fig:ReflSlabPfr}(b), \ref{fig:ReflSlabPfi}(b)). The black dashed line is the solution issued via the microstructured model obtained by coding all the details of the unit cell presented in section \ref{sec:Comsol}. The red continuous line is obtained by solving the relaxed micromorphic problem presented in section \ref{sec:RTSlab}, using the software Mathematica.

It is immediately evident that the relaxed micromorphic model is able to capture the overall behavior of the reflection coefficient for a very wide range of frequencies. More particularly, for lower frequencies and up to the band-gap region, oscillations of the reflection coefficient due to the finite size of the slab are observed in both models. The  band-gap region is also correctly described and corresponds to the frequency interval for which complete reflection ($\mathcal{R}=1$) is observed.

Few differences can be found between the two types of imposed boundary conditions (free and fixed microstructure). This indicates that the characteristic length $L_c$ has little effect in the dynamic regime and it remains a quantity that mainly influences the static behavior of the metamaterial by allowing to obtain the homogenization formulas for the identification of static micromorphic stiffnesses \cite{dagostino2019effective,neff2019identification}.

After the band-gap region, a characteristic frequency can be identified corresponding to which almost complete transmission occurs. This phenomenon is related to internal resonances at the level of the microstructure. It is easy to see that the relaxed micromorphic model is able to correctly describe also the internal resonance phenomenon. The internal resonance is clearly visible in both the discrete and the relaxed micromorphic model at normal incidence. It is, however, lost in the discrete simulation at $45^{\circ}$ notwithstanding the presence of a zero group velocity mode in both models. 

Similar arguments can be carried out for an S incident wave with reference to Figs. \ref{fig:ReflSlabSfr} and \ref{fig:ReflSlabSfi}. In addition to the comments carried out for an L incident wave, we remark in Fig. \ref{fig:ReflSlabSfr}(a) the presence of a small frequency interval, in which some transmission takes place correspondingly to the lower part of the band-gap. This small transmission interval is related to the S-incident activation of the first optic mode that can be associated to micro-motions (mainly rotations) at the level of the unit cell.

In the relaxed micromorphic model, the first optic mode is flat if we set $L_c = 0$, which means that there would be no propagation in the frequency interval of interest. On the other hand, setting $L_c>0$ gives a non-zero group velocity to the first optic mode, thus allowing to catch some transmission in this frequency interval (see Figures \ref{fig:ReflSlabSfr}(a) and \ref{fig:ReflSlabSfi}(a)). Being directly activated by $L_c$, this small transmission interval has a pattern which is evidently affected by the choice of the free or fixed microstructure boundary conditions. In order to improve the description of the transmission pattern in this single frequency interval, the relaxed micromorphic model needs to be enhanced with space-time non localities to include the possibility of describing negative group velocity patterns. 

Apart from the differences observed in the aforementioned small-frequency interval, changing the type of boundary conditions does not sensibly affect the transmission pattern for other frequencies. For this reason, we will present te following results only for the free-microstructure boundary conditions.

We depict in Figure \ref{fig:ReflSlabSweepFreeL} the transmission coefficient ($\mathcal{T} = 1 - \mathcal{R}$) as a function of both the angle and frequency of the longitudinal incident wave for both the relaxed micromorphic and the microstructured models. We observe an excellent agreement between the continuous and discrete simulation for frequencies lower than the band-gap. For those frequencies, transmission is principally allowed by the blue acoustic mode for all directions of propagation. Even if there exists some coupling at non-orthogonal incidence with the other lower frequency modes, the blue acoustic mode is the one which is predominantly activated by an L incident wave and it is, to a large extent, responsible for the transmission across the metamaterial slab before the band-gap. The band-gap region is also correctly described. The validation of the fitting performed on higher frequencies is less precise due to the difficulty encountered for properly controlling the convergence of the detailed microstructured simulations.

Figure \ref{fig:ReflSlabSweepFreeS} depicts the analogous results for an S incident wave. For frequencies lower than the band-gap, we observe once again an excellent agreement between the discrete and continuous simulations for all angles of incidence. We also remark additional interesting phenomena. Firstly, except for the small transmission frequency interval, we see that the band-gap region extends to lower frequencies when considering angles close to normal incidence. This is related to the previously discussed acoustic mode uncoupling, which is observed for angles close to normal incidence. A shear incident wave mostly activates the green acoustic mode (see Fig. \ref{fig:DispersionCurves}) and the first (green) optic one and they are almost entirely responsible for the propagation pattern before the band-gap region. Since the blue acoustic mode is not activated for angles close to normal incidence, the bottom band-gap limit is consequently lower compared to the case of an L incident wave.

A first threshold value of the angle of incidence can be identified (around $5\pi/24 $), for which the two (green and blue) acoustic modes start to couple and energy starts being transmitted. More remarkably, a second threshold value of the incident angle exists (around $\pi/3$), for which the amount of transmission suddenly increases, approaching a total transmission pattern. This is due to a stronger coupling between the two acoustic modes which are activated by an S incident wave for incident angles beyond the second threshold. This impressive pattern is clearly associated to the tetragonal symmetry of the metamaterial: the need for introducing ``generalized classes of symmetry'' in an enriched continuum environment becomes thus evident. We deduce that the agreement is very satisfactory for all the considered angles (going from normal incidence to incidence almost parallel to interface) and for the considered range of frequencies. This fact corroborates the hypothesis, which has been made according to Neumann's principle and which states that the class of symmetry of the metamaterial at the macroscopic scale is the same as the symmetry of the unit cell (tetragonal symmetry in this case).

We conclude this section by pointing out that the simulations performed to obtain Figures \ref{fig:ReflSlabSweepFreeL} and \ref{fig:ReflSlabSweepFreeS} took less than 1 hour for the relaxed micromorphic model and $3$ weeks for the discrete model. Both computations were made with $200$ points in the frequency range and for $90$ angles. 

This tremendous gain in computational time underlines the usefulness of an enriched continuum model versus a discrete one for the description of the mechanical behavior of finite-size metamaterials structures. Metamaterial characterization through enriched models of the micromorphic type opens the way to effective FEM implementation of morphologically complex meta-structures.

\begin{figure}[h!]
	\begin{tabular}{c l}
		\includegraphics[scale=0.487]{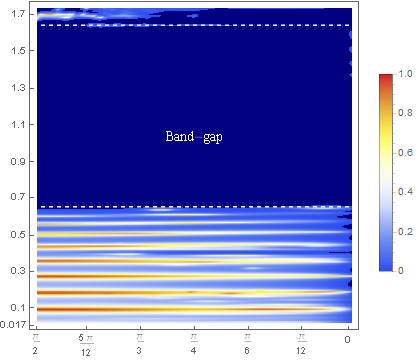}
		&
		\hspace{0.5cm}
		\includegraphics[scale=0.47]{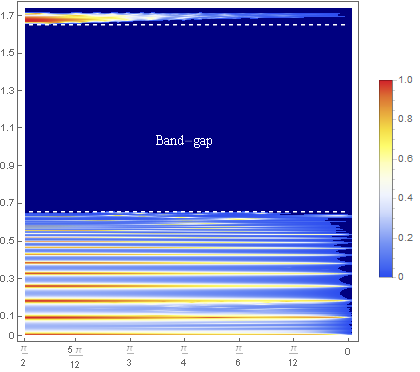}
		\\
		(a) & \hspace{3.7cm} \vspace{-0.2cm} (b)
	\end{tabular}
	\caption{\small\small Transmission coefficient of the metamaterial slab as a function of the angle of incidence $\theta$ and of the wave-frequency $\omega$ for an incident L wave and the free microstructure boundary condition. Panel (a) depicts the microstructured simulations, while panel (b) the analytical relaxed micromorphic model. The origin coincides with normal incidence ($\theta=\pi/2$), while the angle of incidence decreases towards the right until it reaches the value $\theta=0$, which corresponds to the limit case where the incidence is parallel to the interface. The band-gap region is highlighted by two dashed horizontal lines, where, as expected, we observe no transmission. The dark blue zone shows that no transmission takes place, while the gradual change from dark blue to red shows the increase of transmission, red being total transmission. }
	\label{fig:ReflSlabSweepFreeL}
	\begin{picture}(0,0)(0,0)
	\put(14,303) {\scriptsize Transmission, S incidence, micro-structured slab}
	\put(-1,175) {\rotatebox{90}{\scriptsize{Wave frequency $[10^7 \text{rad/s}]$}}}
	\put(60,120){\scriptsize{Angle of incidence $[\text{rad}]$}}
	\put(246,303) {\scriptsize Transmission, S incidence, RMM free microstr.}
	\put(230,175) {\rotatebox{90}{\scriptsize{Wave frequency $[10^7 \text{rad/s}]$}}}
	\put(288,120){\scriptsize{Angle of incidence $[\text{rad}]$}}
	\end{picture}
\end{figure}

\begin{figure}[H]
	\centering
		\begin{subfigure}{.44\textwidth}
			\centering
			\includegraphics[scale=0.47]{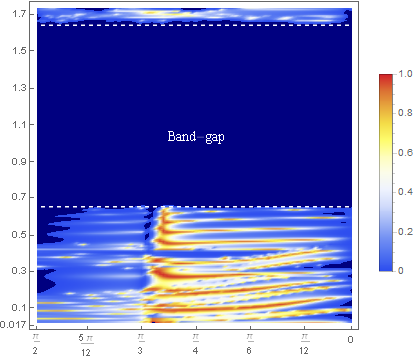}
			\caption{}
			\label{fig:ReflSlabSweepComsolS}
		\end{subfigure}%
	\begin{subfigure}{.4\textwidth}
		\centering
		\includegraphics[scale=0.47]{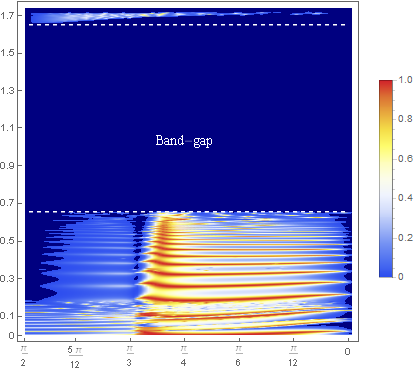}
		\caption{}
		\label{fig:ReflSlabSweepRMMS}
	\end{subfigure}
	\caption{\small Transmission coefficient of the metamaterial slab as a function of the angle of incidence $\theta$ and of the wave-frequency $\omega$ for an incident S wave and the free microstructure boundary condition. Panel (a) depicts the microstructured simulations, while panel (b) the analytical relaxed micromorphic model. The origin coincides with normal incidence ($\theta=\pi/2$), while the angle of incidence decreases towards the right until it reaches the value $\theta=0$, which corresponds to the limit case where the incidence is parallel to the interface. The band-gap region is highlighted by two dashed horizontal lines, where, as expected, we observe no transmission. The dark blue zone shows that no transmission takes place, while the gradual change from dark blue to red shows the increase of transmission, red being total transmission.}
	\label{fig:ReflSlabSweepFreeS}
\end{figure}

\subsection{Scattering at a single relaxed micromorphic interface}

In this subsection we show the results for the reflection coefficient obtained by using the single interface boundary conditions for the relaxed micromorphic continuum as described in section \ref{sec:BCsingle}.

Figure \ref{fig:ReflSinglePfr} shows the reflection coefficient as a function of the frequency for two different angles of incidence ($\theta = \pi/2$ and $\theta=\pi/4$), when considering an L incident wave for the ``single interface'' boundary conditions. Figure \ref{fig:ReflSingleSfr} shows the analogous results for an incident S wave.
As expected, the solution obtained using the ``single interface'' boundary conditions, provide a sort of average behavior for the oscillations at lower frequencies. This is sensible, since when considering a semi-infinite metamaterial, multiple reflections on the two boundaries of the slab are not accounted for. The difference between ``single'' and ``double interface'' boundary conditions in the relaxed micromorphic model becomes less pronounced for higher frequencies, since the wavelength of the considered waves is expected to be much lower than the characteristic-size of the slab.

\begin{figure}[h!]
	\hspace{-0.7cm}
	\begin{tabular}{c l}
		\includegraphics[scale=0.55]{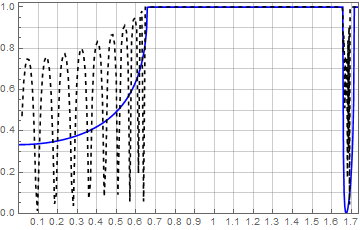}
		&
		\hspace{0.5cm}
		\includegraphics[scale=0.55]{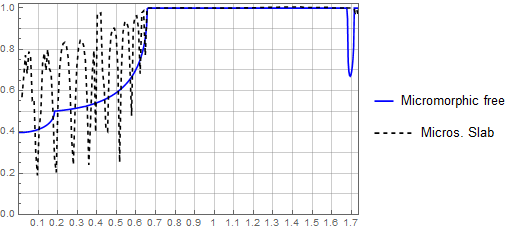}
		\\
		(a) & \hspace{3.7cm} \vspace{-0.2cm} (b)
	\end{tabular}
	\caption{\small Reflection coefficient at the single interface for an incident L wave and for two directions of propagation $\theta= \pi/2$ (normal incidence) (a) and $\theta= \pi/4$ (b) and for the free microstructure boundary condition. The blue curve is generated by the analytical tetragonal relaxed micromorphic model and the black dashed line indicates the fully resolved linear elastic microstructured model.}
	\label{fig:ReflSinglePfr}
	\begin{picture}(0,0)(0,0)
	\put(-3,210) {\scriptsize Reflection coefficient, L $\pi/2$ incidence, free microstr.}
	\put(-20,115) {\rotatebox{90}{\scriptsize{Reflection coefficient}}}
	\put(50,75){\scriptsize{Wave frequency $[10^7 \text{rad/s}]$}}
	\put(226,210) {\scriptsize Reflection coefficient, L $\pi/4$ incidence, free microstr.}
	\put(208,115) {\rotatebox{90}{\scriptsize{Reflection coefficient}}}
	\put(280,75){\scriptsize{Wave frequency $[10^7 \text{rad/s}]$}}
	\end{picture}
\end{figure}

\begin{figure}[h!]
	\hspace{-0.7cm}
	\begin{tabular}{c l}
		\includegraphics[scale=0.55]{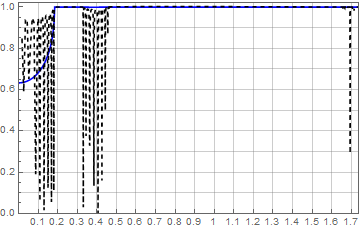}
		&
		\hspace{0.5cm}
		\includegraphics[scale=0.55]{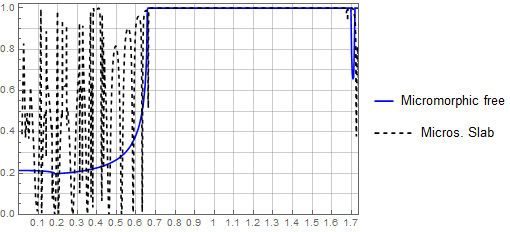}
		\\
		(a) & \hspace{3.7cm} \vspace{-0.2cm} (b)
	\end{tabular}
	\caption{\small Reflection coefficient at the single interface for an incident L wave and for two directions of propagation $\theta= \pi/2$ (normal incidence) (a) and $\theta= \pi/4$ (b) and for the free microstructure boundary condition. The blue curve is generated by the analytical tetragonal relaxed micromorphic model and the black dashed line indicates the fully resolved linear elastic microstructured model.}
	\label{fig:ReflSingleSfr}
	\begin{picture}(0,0)(0,0)
	\put(-3,210) {\scriptsize Reflection coefficient, S $\pi/2$ incidence, free microstr.}
	\put(-20,115) {\rotatebox{90}{\scriptsize{Reflection coefficient}}}
	\put(50,75){\scriptsize{Wave frequency $[10^7 \text{rad/s}]$}}
	\put(226,210) {\scriptsize Reflection coefficient, S $\pi/4$ incidence, free microstr.}
	\put(208,115) {\rotatebox{90}{\scriptsize{Reflection coefficient}}}
	\put(280,75){\scriptsize{Wave frequency $[10^7 \text{rad/s}]$}}
	\end{picture}
\end{figure}

\begin{figure}[h!]
	\begin{tabular}{c l}
		\includegraphics[scale=0.47]{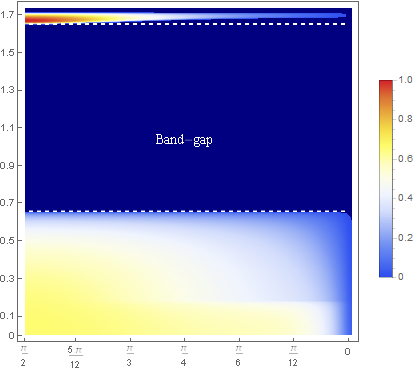}
		&
		\hspace{0.5cm}
		\includegraphics[scale=0.47]{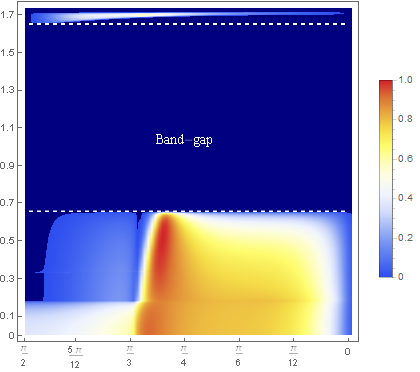}
		\\
		(a) & \hspace{3.7cm} \vspace{-0.2cm} (b)
	\end{tabular}
	\caption{\small\small Transmission coefficient of the single interface as a function of the angle of incidence $\theta$ and of the wave-frequency $\omega$. Panel (a) depicts the case of an incident L wave, while panel (b) the case of an incident S wave. The origin coincides with normal incidence ($\theta=\pi/2$), while the angle of incidence decreases towards the right until it reaches the value $\theta=0$, which corresponds to the limit case where the incidence is parallel to the interface. The band-gap region is highlighted by two dashed horizontal lines, where, as expected, we observe no transmission. The dark blue zone shows that no transmission takes place, while the gradual change from dark blue to red shows the increase of transmission, red being total transmission.}
	\label{fig:ReflSweepSingle}
	\begin{picture}(0,0)(0,0)
	\put(14,290) {\scriptsize Transmission single, L incidence, free microstr.}
	\put(-5,165) {\rotatebox{90}{\scriptsize{Wave frequency $[10^7 \text{rad/s}]$}}}
	\put(54,110){\scriptsize{Angle of incidence $[\text{rad}]$}}
	\put(243,290) {\scriptsize Transmission single, S incidence, free microstr.}
	\put(222,165) {\rotatebox{90}{\scriptsize{Wave frequency $[10^7 \text{rad/s}]$}}}
	\put(284,110){\scriptsize{Angle of incidence $[\text{rad}]$}}
	\end{picture}
\end{figure}

We conclude this subsection by showing the transmission coefficients at the ``single interface'', as a function of the frequency and the angle of incidence for L and S incident waves (Fig. \ref{fig:ReflSweepSingle}(a) and \ref{fig:ReflSweepSingle}(b), respectively). Comparing Fig. \ref{fig:ReflSlabSweepFreeL}(b) to \ref{fig:ReflSweepSingle}(a) and \ref{fig:ReflSlabSweepRMMS} to \ref{fig:ReflSweepSingle}(b), we can visualize the extent to which the single interface can be considered to represent a meta-structure of finite size. Although some basic averaged information is contained in Figures \ref{fig:ReflSweepSingle}(a) and \ref{fig:ReflSweepSingle}(b) (band-gap, critical angles), the detailed scattering behavior of the finite slab cannot be inferred from it. This provides additional evidence for the real need to propose a framework in which macroscopic boundary conditions can be introduced in a simplified way. Semi-infinite problems for metamaterials are solved in the context of homogenization methods in \cite{srivastava2017evanescent, willis2009exact}, but to the author's knowledge, a systematic method for the solution of scattering problems for metamaterials of finite size is not available in the literature. 

\section{Conclusions and further perspectives}\label{sec:conclusions}
In this paper we presented for the first time the scattering solution for a metamaterial slab of finite size, modeled via a rigorous micromorphic-type boundary value problem describing its homogenized behavior.

The correct macroscopic boundary conditions (continuity of macroscopic displacement, of generalized tractions and of micro-distortion/double-traction) are presented and are intrinsically compatible with the used macroscopic bulk PDEs. The scattering properties of the considered finite-size meta-structures, as obtained via the relaxed micromorphic model, are compared to a direct microstructured simulation. This latter simulation is obtained by assuming that the metamaterial's unit cell is periodic and linear-elastic. Excellent agreement is found for all angles of incidence and for frequencies going from the long-wave limit to the first band-gap and beyond. Further work will be devoted to applying the relaxed micromorphic model validated in this paper to the scattering of metamaterial obstacles that have finite size in both the $x_1$ and $x_2$ directions.

The results presented in this paper clearly indicate the need of a new mindset with respect to classical homogenization to open new possibilities for large-scale meta-structural design. These results will serve as a starting stepping stone for the development of new modeling tools for the mechanics of finite-sized metamaterials enabling the design of large-scale meta-structures.

\section*{Acknowledgements}
Angela Madeo and Domenico Tallarico acknowledge funding from the French Research Agency ANR, “METAS	MART”
	(ANR-17CE08-0006). Angela Madeo acknowledges support from IDEXLYON in the framework of the
	“Programme Investissement d’Avenir” ANR-16-IDEX-0005. All the authors acknowledge funding from the “Région
	Auvergne-Rhône-Alpes” for the “SCUSI” project for international mobility France/Germany.
\newpage

\appendix
\section{Appendix}\label{app:1}
\small
\subsection{Energy flux for the anisotropic relaxed micromorphic model}
\subsubsection{Derivation of expression \eqref{eq:fluxAniso}}\label{app:derivationflux}
The total energy is given by: 
\begin{equation}\label{eq:totalenergy}
E=J(u_{,t},\nabla u_{,t}, P_{,t})+W(\nabla u, P, \Curl P),
\end{equation}
where $J(u_{,t},\nabla u_{,t}, P_{,t})$ and $W(\nabla u, P, \Curl P)$ are defined in \eqref{eq:kineticRMM} and \eqref{eq:strainRMM}. Differentiating \eqref{eq:totalenergy} with respect to time\footnote{Let $\psi$ be a vector field and $A$ a second order tensor field. Then
	\begin{equation}\label{eq:vectorid1}
	\langle \nabla \psi, A\rangle  = \Div (\psi \cdot A) - \langle \psi, \Div A \rangle.
	\end{equation}
	Taking $\psi = u_{,t}$ and $A = \sigmatil$ we have
	\begin{equation}\label{eq:vectorid2}
	\left\langle \nabla u_{,t}, \sigmatil \right\rangle = \Div(u_{,t}\cdot \sigmatil) - \left\langle u_{,t},\Div \sigmatil \right\rangle.
	\end{equation}
	Furthermore, we have the following identity
	\begin{equation}\label{eq:vectorid3}
	\left\langle m, \Curl P_{,t} \right\rangle = \Div\left( (m^T\cdot P_{,t}):\epsilon\right)+\left\langle \Curl m , P_{,t}\right\rangle,
	\end{equation}
		which follows from the identity $\div( v \times w)=w \cdot\curl v - v \cdot \curl w$, where $v, w$ are suitable vector fields, $\times$ is the usual vector product and $:$ is the double contraction between tensors.} 
we have:
\begin{align}
E_{,t} =\, &\langle u_{,t},\rho u_{,tt} \rangle +\langle \sym P_{,t}, \Jmic\sym P_{,tt}\rangle + \langle  \skew P_{,t}, \Jc\skew P_{,tt}\rangle +\langle  \sym \nabla u_{,t},\Te\sym \nabla u_{,tt}\rangle \nonumber\\
& + \langle  \skew \nabla u_{,t},\Tc\skew \nabla u_{,tt}\rangle + \langle \Ce \sym(\nabla u - P),\sym (\nabla u - P)_{,t}\rangle + \langle \Cc \skew (\nabla u - P), \skew (\nabla u - P)_{,t}\rangle  \nonumber  \\
&+ \langle \Cmic \sym P, \sym P_{,t} \rangle + \mumic L_c^2\langle \Curl P, \Curl P_{,t}\rangle. \label{eq:EnergyTimeDerivative}
\end{align}
Using  the governing equations \eqref{eq:governingRMMclosed}, definitions \eqref{eq:relaxedRHS2} for $\widetilde{\sigma}, \widehat{\sigma}, s, m$ and \eqref{eq:vectorid1}, \eqref{eq:vectorid2}, \eqref{eq:vectorid3} we have:
\begin{itemize}
	\item[] 
	\begin{align*}
	\langle u_{.t},\rho \, u_{,tt}\rangle&=\langle u_{,t},\Div(\Te \sym \nabla u_{,tt}+\Tc \skew \nabla u_{,tt})+\Div\widetilde{\sigma}\rangle\\
		& = \langle u_{,t},\Div\widetilde{\sigma}\rangle+\langle u_{,t},\Div(\underbrace{\Te \sym \nabla u_{,tt}+\Tc \skew \nabla u_{,tt}}_{:=\sigmahat})\rangle \\
	&=\langle u_{,t},\Div\widetilde{\sigma}\rangle + \langle u_{,t},\Div\sigmahat\rangle,
	\end{align*}
	\vspace{-0.8cm}
	\item[] 
	\begin{align*}
	\langle \sym P_{,t}, \Jmic\sym P_{,tt}\rangle &+ \langle  \skew P_{,t}, \Jc\skew P_{,tt}\rangle\\
	&=\langle \sym P_{,t}+\skew P_{,t}, \Jmic\sym P_{,tt}\rangle + \langle  \sym P_{,t}+\skew P_{,t}, \Jc\skew P_{,tt}\rangle\\
	&=\langle P_{,t},\Jmic\sym P_{,tt}+ \Jc\skew P_{,tt}\rangle =\langle P_{,t},\widetilde{\sigma}_e -s - \sym \Curl m + \widetilde{\sigma}_c - \skew \Curl m \rangle\\
	&=\langle P_{,t}, \widetilde{\sigma} - s - \Curl m \rangle = \langle P_{,t},\sigmatil \rangle - \langle P_{,t}, s \rangle - \langle P_{,t}, \Curl m\rangle,
	\end{align*}
		\vspace{-0.8cm}
	\item[]
	\begin{align*}
	\langle  \sym \nabla u_{,t},\Te\sym \nabla u_{,tt}\rangle &+ \langle  \skew \nabla u_{,t},\Tc\skew \nabla u_{,tt}\rangle\\
	&=\langle  \sym \nabla u_{,t}+\skew \nabla u_{,t},\Te\sym \nabla u_{,tt}\rangle + \langle  \sym \nabla u_{,t}+\skew \nabla u_{,t},\Tc\skew \nabla u_{,tt}\rangle\\
	&=\langle \nabla u_{,t},\underbrace{\Te\sym \nabla u_{,tt}+\Tc\skew \nabla u_{,tt}}_{=\sigmahat}\rangle=\langle \nabla u_{,t},\sigmahat\rangle=
	\Div (u_{,t}\cdot \sigmahat) - \langle u_{,t},\Div \sigmahat \rangle,
	\end{align*}
		\vspace{-0.8cm}
		\item[]
	\begin{align*}
	\langle \Ce \sym(\nabla u - P),\sym (\nabla u - P)_{,t}\rangle  &+ \langle \Cc \skew (\nabla u - P), \skew (\nabla u - P)_{,t}\rangle\\
	&=\langle \Ce \sym(\nabla u - P),\sym (\nabla u - P)_{,t}+\skew (\nabla u - P)_{,t}\rangle\\
	&+ \langle \Cc \skew (\nabla u - P),\sym (\nabla u - P)_{,t} + \skew (\nabla u - P)_{,t}\rangle \\
	&=\langle \Ce \sym(\nabla u - P)+\Cc \skew (\nabla u - P),(\nabla u - P)_{,t}\rangle=\langle \sigmatil ,(\nabla u - P)_{,t}\rangle \\
	&=\langle \sigmatil, \nabla u_{,t}\rangle - \langle \sigmatil, P_t\rangle = \Div (u_{,t}\cdot \sigmatil) - \langle u_{,t},\Div \sigmatil \rangle - \langle \sigmatil, P_t\rangle,
	\end{align*}
	\vspace{-0.8cm}
	\item[]
	\begin{align*}
	\langle \Cmic \sym P, \sym P_{,t} \rangle = \langle \Cmic \sym P, \sym P_{,t} +  \skew P_{,t} \rangle = \langle \Cmic \sym P, P_{,t} \rangle =\langle s,P_{,t}\rangle,
	\end{align*}
	\vspace{-0.8cm}
	\item[]
	\begin{equation*}
		\mumic L_c^2\langle \Curl P, \Curl P_{,t}\rangle = \langle m, \Curl P_{,t} \rangle= \Div\left( (m^T\cdot P_{,t}):\epsilon\right) - \langle \Curl m, P_{t}\rangle
	\end{equation*}
	
\end{itemize}
So, by adding all the above and simplifying \eqref{eq:EnergyTimeDerivative} becomes:
\begin{equation}\label{eq:EnergyTimeDerivativeDiv}
E_{,t} = \Div \left[ \left(\sigmatil + \sigmahat \right)^T\cdot u_{,t} + \left( m^T \cdot P_{,t} \right):\epsilon \right],
\end{equation}
from which we can define the energy flux for the general anisotropic relaxed micromorphic model: 
\begin{equation}\label{eq:FluxAnisoapp}
H = -\left(\sigmatil + \sigmahat\right)^T\cdot u_{,t} - \left( m^T \cdot P_{,t} \right):\epsilon.
\end{equation}

\subsubsection{Analytical expression of the flux for the relaxed micromorphic model when $L_c=0$}\label{app:analyticalflux}
The flux $H$ of the relaxed micromorphic model when $L_c=0$ can be written as (using Lemma \ref{Lemma1})
\begin{equation}\label{eq:fluxrmm1}
H = \frac{1}{2}\Re \left[
\left( \alpha_1 \omega (-2 \omega^2 \mathcal{A} + \mathcal{B}) + \alpha_2 \omega(- 2 \omega^2 \mathcal{C} + \mathcal{D})	\right) \mathcal{E}^{\star}  +\left(
\alpha_1\omega \,( \mathcal{F} + \omega \, \mathcal{G} + \omega^2 \mathcal{H}) +\alpha_2 \omega (\mathcal{I} + \omega \, \mathcal{J} + \omega^2 \mathcal{K})
\right) \mathcal{L}^{\star}
\right],
\end{equation}
\normalsize
with
\begin{align}
\mathcal{A} &= k_1^{\op} \phi_1^{\op} \left(\bar{\eta}_1+\frac{1}{2}\bar{\eta}_3\right) + k_0 \,\phi_2^{\op} \bar{\eta}_3, \nonumber\\
\mathcal{B} &=(\lame + 2\mue)\left(k_1^{\op} \phi_1^{\op} - \phi_3^{\op}\omega\right) + \lame \left(k_0\, \phi_2^{\op}-\phi_6^{\op}\omega\right),\nonumber\\
\mathcal{C} &=  k_1^{\twop} \phi_1^{\twop} \left(\bar{\eta}_1+\frac{1}{2}\bar{\eta}_3\right) + k_0\, \phi_2^{\twop} \bar{\eta}_3,\nonumber\\
\mathcal{D} &=(\lame + 2\mue)\left(k_1^{\twop} \phi_1^{\twop} - \phi_3^{\twop}\omega\right) + \lame \left(k_0\, \phi_2^{\twop}-\phi_6^{\twop}\omega\right),\nonumber\\
\mathcal{E} &= \alpha_1 \phi_1^{\op} + \alpha_2 \phi_1^{\twop},\nonumber\\
\mathcal{F} &= k_0 \, \left( \mue^{\ast} -\muc \right) \phi_1^{\op} + k_1^{\op}  \left( \mue^{\ast} +\muc \right) \phi_2^{\op}, \\
\mathcal{G} &=-\left(\mue^{\ast} -\muc\right) \phi_4^{\op} - \left(\mue^{\ast} +\muc\right) \phi_5^{\op},\nonumber\\
\mathcal{H} &= k_0 \,\phi_1^{\op}\left(\frac{1}{4} \bar{\eta}_2-\bar{\eta}^{\ast}\right) -  k_1^{\op} \phi_2^{\op}\left(\frac{1}{4} \bar{\eta}_2 + \bar{\eta}^{\ast}\right),\nonumber\\
\mathcal{I} &= k_0 \, \left( \mue^{\ast} -\muc \right) \phi_1^{\twop} + k_1^{\twop}  \left( \mue^{\ast} +\muc \right) \phi_2^{\twop}, \nonumber\\
\mathcal{J} &= -\left(\mue^{\ast} -\muc\right) \phi_4^{\twop} - \left(\mue^{\ast} +\muc\right) \phi_5^{\twop}, \nonumber\\
\mathcal{K} & =k_0 \,\phi_1^{\twop}\left(\frac{1}{4} \bar{\eta}_2-\bar{\eta}^{\ast}\right) -  k_1^{\twop} \phi_2^{\op}\left(\frac{1}{4} \bar{\eta}_2 + \bar{\eta}^{\ast}\right),	\nonumber\\
\mathcal{L} &=  \alpha_1 \phi_2^{\op} + \alpha_2 \phi_2^{\twop}.\nonumber
\end{align}

\subsection{The matrix $\widehat{A}$}\label{app:matrixA}
We present the matrix row-wise. We have 
\small
\begin{align*}
\widehat{A}_{1} =  \left(  
\begin{array}{c}
-\left(\rho + k_1^2 (2\bar{\eta}_1 + \bar{\eta}_3) + k_2^2 \left( \frac{1}{4}\bar{\eta}_2+\bar{\eta}_1^{\ast} \right)\right)\,\omega^2 +k_1^2(\lame +2 \mue ) +k_2^2 \left( \muc +\mue^{\ast} \right)\\
k_1 k_2 \left(\frac{1}{4} \bar{\eta}_2 - \bar{\eta}_3+\bar{\eta}_1^{\ast}\right)\,\omega^2 + k_1 k_2  (\lame - \muc +\mue^{\ast})\\
i\, k_1 (\lame + 2\mue)\\
 i\, k_2(\muc + \mue^{\ast})\\
 i\, k_2(-\muc + \mue^{\ast})\\
 i\,k_1 \lame
\end{array}
\right)^T,\\
 \widehat{A}_{2} = \left(  
\begin{array}{c}
k_1k_2  \left(\frac{1}{4}\bar{\eta}_2 - \bar{\eta}_3 -\bar{\eta}_1^{\ast}\right)\,\omega^2 + k_1k_2\left(\lame -\muc + \mue^{\ast}\right)\\
-\left(\rho + k_2^2 (2\bar{\eta}_1+\bar{\eta}_3) +k_1^2 \left(\frac{1}{4}\bar{\eta}_2+\bar{\eta}_1^{\ast}\right) \right)\,\omega^2 + k_2^2(\lame + 2\mue )+k_1^2\left( \muc+\mue^{\ast} \right)\\
i\,k_2 \lame \\
-i \, k_1 (\muc-\mue^{\ast})\\
i\,k_1 (\mue + \mue^{\ast})\\
i\,k_2(\lame + 2\mue)
\end{array}
\right)^T, 
\end{align*}
\begin{align*}
\begin{array}{cc}
\widehat{A}_{3} = \left(  
\begin{array}{c}
-i\,k_1(\lame + 2 \mue)\\
-i\, k_2 \lame\\
-(2\eta_1 + \eta_3)\,\omega^2 +\lame + \lammic + 2 (\mue + \mumic) + k_2^2\Lc\\
-k_1k_2\Lc \\
0\\
- \eta_3 \,\omega^2 + \lame + \lammic 
\end{array}
\right)^T,&  \widehat{A}_{4} =  \left(  
\begin{array}{c}
-i\,k_2(\muc + \mue^{\ast})\\
i\,k_1 (\muc - \mue^{\ast})\\
-k_1k_2\Lc\\
- (\eta_2 + \eta_1^{\ast})\,\omega^2 + \muc + \mue^{\ast} + \mumic^{\ast} + k_1^2\Lc\\
 (\eta_2 - \eta_1^{\ast})\,\omega^2 - \muc + \mue^{\ast} + \mumic^{\ast} \\
0
\end{array}
\right)^T, 
\end{array}
\end{align*}
\begin{align*}
\begin{array}{cc}
\widehat{A}_{5} = \left(  
\begin{array}{c}
i\,k_2(\muc-\mue^{\ast})\\
-i\,k_1(\muc + \mue^{\ast})\\
0\\
(\eta_2 - \eta_1^{\ast})\,\omega^2 -\muc + \mue^{\ast} + \mumic^{\ast} \\
-(\eta_2 + \eta_1^{\ast})\,\omega^2 + \muc + \mue^{\ast} + \mumic^{\ast} + k_2^2\Lc\\
-k_1k_2\Lc
\end{array}
\right)^T,&  \widehat{A}_{6} =  \left(  
\begin{array}{c}
-i\,k_1\lame\\
-i\,k_2(\lame+2\mue)\\
-\eta_3\,\omega^2 + \lame + \lammic \\
0\\
-k_1k_2\Lc\\
-(2\eta_1+\eta_3)\,\omega^2 + \lame + \lammic + 2(\mue + \mumic) + k_1^2\Lc
\end{array}
\right)^T, 
\end{array}
\end{align*}
Then, the matrix $\widehat{A}$ is 
\begin{equation}
\widehat{A} = \left( \widehat{A}_1,\widehat{A}_2,\widehat{A}_3,\widehat{A}_4,\widehat{A}_5,\widehat{A}_6\right)^T
\end{equation}

\section{Lemma \ref{Lemma1}}\label{app:lemma}
\small
We have the following well-known result.
\begin{lemma}\label{Lemma1}
	Let 
	\[
	u(x,t) = A(x)e^{i(\langle k,x\rangle -\omega t)}, \quad \sigma(x,t) = B(x)e^{i(\langle k,x\rangle -\omega t)}
	\]
	be two functions with $A,B: \R^3 \to \C$. Then the following holds
	\begin{equation}\label{eq:Lemma1}
	\frac{1}{T} \int_0^{T} \Re\{u(x,t)\}  \Re\{\sigma(x,t)\} dt= \frac{1}{2} \Re (u(x,t) \sigma^{*}(x,t)),
	\end{equation}
	where $T=\frac{2\pi}{\omega}$ is the period of the functions $u, \sigma$, ${*}$ denotes the complex conjugate, $x=(x_1,x_2)^{\rm T}$ and $k=(k_1,k_2)^{\rm T}$.
\end{lemma}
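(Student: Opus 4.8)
The plan is to exploit the fact that both $u$ and $\sigma$ are purely time-harmonic with the \emph{same} angular frequency $\omega$, so that after separating the time dependence the integrand becomes a trigonometric polynomial in $\omega t$ whose only term surviving the average over one period is the constant one.

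First I would strip off the common time factor: write $u(x,t) = U(x)\,e^{-i\omega t}$ and $\sigma(x,t) = V(x)\,e^{-i\omega t}$, where $U(x) := A(x)\,e^{i\langle k,x\rangle}$ and $V(x) := B(x)\,e^{i\langle k,x\rangle}$ are independent of $t$. Using the elementary identity $\Re\{z\} = \tfrac12(z+z^*)$ valid for every $z\in\C$, I would expand the product
\[
\Re\{u\}\,\Re\{\sigma\} = \tfrac14\bigl(U e^{-i\omega t} + U^* e^{i\omega t}\bigr)\bigl(V e^{-i\omega t} + V^* e^{i\omega t}\bigr) = \tfrac14\bigl(UV\,e^{-2i\omega t} + U^*V^*\,e^{2i\omega t} + UV^* + U^*V\bigr).
\]

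Next I would integrate over one period $T = \tfrac{2\pi}{\omega}$. Since $\int_0^T e^{\pm 2i\omega t}\,dt = 0$, the two oscillating contributions drop out and only the $t$-independent part remains, giving
\[
\frac{1}{T}\int_0^T \Re\{u\}\,\Re\{\sigma\}\,dt = \tfrac14\bigl(UV^* + U^*V\bigr) = \tfrac12\,\Re\bigl(UV^*\bigr).
\]
Finally I would note that $u\,\sigma^* = U e^{-i\omega t}\,V^* e^{i\omega t} = UV^*$ is itself independent of $t$ (the temporal and, after substituting $U,V$, also the spatial phases cancel: $UV^* = A(x)B^*(x)$), so that $\tfrac12\Re(UV^*) = \tfrac12\Re\bigl(u(x,t)\sigma^*(x,t)\bigr)$, which is exactly the asserted identity \eqref{eq:Lemma1}.

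There is no genuine obstacle here; this is the standard time-averaging lemma for complex-amplitude notation. The only points requiring a little care are the bookkeeping of the four cross terms in the product $\Re\{u\}\Re\{\sigma\}$ and the observation that, although the right-hand side of \eqref{eq:Lemma1} is written as a function of $(x,t)$, the combination $u\,\sigma^*$ is in fact time-independent, so the statement is well posed. The argument uses nothing about the specific spatial profiles $A,B$ beyond their being fixed in $t$, and in particular $k$ may be complex (as needed for evanescent modes) without any change.
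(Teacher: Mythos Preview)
Your proof is correct and follows essentially the same approach as the paper: both expand $\Re\{u\}\Re\{\sigma\}$ via $\Re(z)=\tfrac12(z+z^*)$, observe that the $e^{\pm 2i\omega t}$ terms integrate to zero over a period, and then note that $u\,\sigma^*$ is time-independent so that the surviving constant term equals $\tfrac12\Re(u\sigma^*)$. The only cosmetic difference is that you absorb the spatial phase into auxiliary quantities $U,V$, whereas the paper carries $A,B$ and $e^{i\langle k,x\rangle}$ separately through the computation.
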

\begin{proof}
	We have:
	\begin{align}
	\frac{1}{T}\int_0^{T} \Re \{u(x,t)\} \Re \{\sigma(x,t)\}dt =& \frac{1}{T}\int_0^{T} \Re \left(A e^{i(\langle k,x\rangle -\omega t)}\right) \Re \left(B e^{i(\langle k,x\rangle -\omega t)}\right)dt \nonumber\\
	=&\frac{1}{T}\int_0^{T} \frac{A e^{i(\langle k,x\rangle -\omega t)} + A^{*} e^{-i(\langle k,x\rangle -\omega t)}}{2}\frac{B e^{i(\langle k,x\rangle -\omega t)}+B^{*} e^{-i(\langle k,x\rangle -\omega t)}}{2}dt \nonumber \\
	=&\frac{1}{4T}\int_0^{T} A\,B e^{i2\langle k,x\rangle } \underbrace{e^{-i2\omega t}}_{\text{periodic}} + A B^{*}+ A^{*} B  + A^{*} B^{*} e^{-i2\langle k,x\rangle} \underbrace{e^{i2\omega t}}_{\text{periodic}} dt \nonumber \\
	=&\frac{1}{4T}\int_0^{T} AB^{*}+ A^{*}B   dt \nonumber \\
	=& \frac{1}{2} \Re \left(AB^{*}  \right) \nonumber \\
	=& \frac{1}{2} \Re \left(u(x,t) \sigma^{*}(x,t)\right),\label{eq:ALemmaProof}
	\end{align}
	since 
	\[\frac{1}{2} \Re \left(u(x,t) \sigma^{*}(x,t)\right) = \frac{1}{2} \Re \left( Ae^{i(\langle k,x\rangle -\omega t)}  B^{*}e^{-i(\langle k,x\rangle -\omega t)}   \right) =\frac{1}{2} \Re \left(AB^{*} \right) \]
	where we used the facts that the integral of the periodic function $e^{2i\omega t}$ over its period is zero and that for any complex number $z\in \C$: $\Re(z) = \frac{z+z^{*}}{2}$.
\end{proof}

{\footnotesize{}\bibliographystyle{plainnat}
		\bibliography{library.bib}
}
\end{document}